\theoremstyle{plain}
\newtheorem{theo}{Theorem}[section]
\newtheorem{req}{Remark}[section]
\newtheorem{pp}[theo]{Proposition}
\newtheorem{df}{Definition}[section]
\newtheorem{cor}[theo]{Corollary}
\newtheorem{lem}[theo]{Lemma}
\newtheorem{hp}{Assumption}[section]
\title{Stochastic Utilities With a Given Benchmark Portfolio : Approach by Stochastic Flows\footnote{With the financial support of
the "Fondation du Risque"
and the F\'ed\'eration des banques Fran\c{c}aises.} \footnote{Key Words. forward utility, performance criteria, horizon-unbiased utility,
 consistent utility, progressive utility, portfolio optimization, optimal portfolio, duality, minimal martingale measure, Stochastic flows of homeomorphisms}}
\author{ El Karoui Nicole,
\thanks{ \small  LPMA, UMR CNRS  6632,  Universit\'e Pierre et Marie Curie, CMAP, UMR CNRS 7641, \'Ecole Polytechnique  }
\\
\and Mrad~Mohamed~ \thanks
{\small CMAP, UMR CNRS 7641, \'Ecole Polytechnique, \small  LAGA, UMR CNRS 7539,  Universit\'e Paris 13}}
\DeclareMathOperator*{\esssup}{ess\,sup}
\def\ind{{\mathbf{1}}}
\def\cal{\mathcal}
\def\P{{\mathbb P}}
\def\Q{{\mathbb Q}}
\def\R{{\mathbb R}}
\def\E{{\mathbb E}}
\def\F{{\cal F}}
\def\M{{\cal M}^e}
\def\Cc{{\cal C}}
\def\tU{{\tilde U}}
\def\tu{{\tilde u}}
\def\tv{{\tilde v}}
\def\tX{{\tilde X}}
\def\GX{{\mathscr X}}
\def\GY{{\mathscr Y}}
\def\PX{{\mathbb X}^+ }
\def\X{{\cal X}}
\def\Y{{\cal Y}}
\def\tU{{\tilde U}}
\def\eps{\varepsilon}
\newcommand{\rmi}{{\rm (i) $\>\>$}}
\newcommand{\rmii}{{\rm (ii) $\hspace{1.5mm}$}}
\begin{document}
 \maketitle
\abstract{The paper generalizes the construction by stochastic flows of  consistent
 utility processes introduced by M. Mrad and N. El Karoui in \cite{MRADNEK01}. The utilities random fields are defined from a
general class of processes denoted by $\GX$. Making minimal assumptions and convex constraints on test-processes, we
construct by composing two stochastic flows of homeomorphisms,  all the consistent stochastic utilities  whose the
optimal-benchmark  process is  given, strictly increasing in its initial condition. Proofs are essentially based on
stochastic change of variables techniques.
}

\section{Introduction.}
The purpose of this paper is to generalize the construction of consistent utilities  by stochastic flows method introduced in
\cite{MRADNEK01}  in a It\^{o}'s framework  where securities are modeled as continuous It\^{o}'s semimartingales. The concept
of consistent stochastic utilities, also called "forward dynamic utilities", has  been introduced by M. Musiela and T.
Zariphopoulou in 2003 \cite{zar-03,zar-07} ; since this notion appears in the literature
in varied forms, in the work of T. Choulli, C. Stricker and L. Jia \cite{choulli},  V. Henderson and
D.  Hobson \cite{Hobson},  
F. Berrier, M. Tehranchi and Rogers \cite{Mike}, G. Zitkovic \cite{zitkovic} and in the work of M. Mrad and N. El Karoui in \cite{MRADNEK01}. Intuitively, a
stochastic utility should represent, possibly changing over time, individual preferences
of an agent. The agent's preferences  are affected over time by the information available on the market  represented by the filtration
$(\F_t, t \ge 0)$ defined on the probability space $(\Omega,\P, \F).$ 
For this, the agent
starts with today's specification of his utility, $u(0,x)=u(x)$ , and then builds the
process $U(t,x)$ for $t > 0$ taking into account the information flow given by $(\F_t, t \ge 0)$.
Consequently, its utility, denoted by $U (t, x)$ is a progressive process depending on
time and wealth, $t$ and $x$, which is as
a function of $x$ strictly increasing
and concave. 
In contrast to the classical literature, there is no pre-specified trading
 horizon at the end of which the utility datum is assigned. Consequently the initial function $U(0,x)$ is given in place of
$U(T,x)$ where $T$ is the time horizon in the classical problem . 
 These utility random fields will be called consistent progressive utilities in that follows.

Working on  a general framework,  our main contribution is the new approach by stochastic flows of consistent dynamic utilities,
proposed in Section \ref{NASF}. The idea is the same as in  \cite{MRADNEK01}:  
suppose the optimal process denoted by $X^*$ is strictly increasing
with respect to its initial capital. Denote by $\X$ the reverse flow of $X^*$ i.e. $\X_.(z):=(X^*_.)^{-1}(z)$, by $Y^*$ the
optimal process of the dual problem and by $U_x$ the first derivative of the random variable $U$ with respect  to the spacial
parameter $x$, then from the duality identity
$U_x(t,X^*_t(x))=Y^*(t,U_x(0,x))$   we easily get  $U_x(t,x)=Y^*\big(t,U_x(0,\X(t,x))\big)$ and finally  $U$  by simple 
integration. We then, by stochastic flows techniques,   construct
all consistent utilities generating $X^*$ as optimal process.\\

\noindent
Let us end this introduction by an overview of the paper. In the next section,  the framework is introduced   and the

definition of consistent stochastic  utilities is given. Also, in the next  section, the  model class $\GX$ of test-processes
is given and  a simple and intuitive example of  stochastic utility which  focuses on a sufficient assumption to the
existence of these random fields is developed.
Next  optimality conditions are established and   the question of duality  is elaborated. In paragraph \ref{TCN}, we show  the stability of the notion  of consistent utility by change of numeraire. 
Section \ref{NASF} is the core of the paper, we present our new approach.

\section{Consistent Stochastic Utilities}
 To get started, we consider a probability space $(\Omega,{\cal F},\mathbb{P})$, a time horizon $T_H\in (0,\infty]$
and a filtration $\mathbb F=({\cal F}_{0\le t \le T_H})$ satisfying the usual conditions of right-continuity and
completeness. Thus  only  the c\`adl\`ag version of $(\mathbb{P},\mathbb F)$-semimartingales are considered.

\noindent
Before  moving to the precise definition of the  utilities processes that is the subject of this work, the definition of the
notion of g-supermartingale is needed and will be extensively used.
\begin{df}
 A  stochastic process $(Z_t)_{t}$ will be called  a generalized supermartingale with respect to ${\cal F}$  if
 $\mathbb{E}(Z_t/{\cal F_s})\le Z_s$ whenever $s\ge0,~t\ge s$ with $\mathbb{E}(Z^+_t)<+\infty$ a.s. for any $t$, where
$Z^+_t$ is the positive part of $Z$ given by $Z^+_t:=Z_t\ind_{Z_t\ge0}$.
\end{df}

\subsection{Progressive Utilities}
To simplify the understanding of stochastic utilities and how they differ from utility functions, let recall the definition of the latter.

\noindent
A utility function is a concave strictly increasing function $U:\R\rightarrow[-\infty,+\infty)$ satisfying:
\begin{itemize}
\item The half-line $dom(U)\stackrel{def}{=}\{x\in \R;U(x)>-\infty\}$ is a nonempty subset of $\R$.
\item $U_x$ is continuous, positive and strictly decreasing on the interior of $dom(U)$, and 
\begin{equation}
U_x(+\infty)\stackrel{def}{=} \lim_{x\rightarrow +\infty}U_x(x)=0                                                                                      
\end{equation}
Set $\bar{x}:=\inf\{x\in \R; U(x)>-\infty\}$ so that $ \bar{x}\in (-\infty,+\infty)$ and either $dom(U)=(\bar{x},+\infty)$.  We define 
\begin{equation}
U_x(\bar{x})\stackrel{def}{=} \lim_{x\downarrow\bar{x}}U_x(x)                                                                                      
\end{equation}
so that $U_x(\bar{x})\in (0,+\infty]$.
\end{itemize}
In the particular case where $\bar{x}\in\{-\infty,0\}$ and $U_x(\bar{x})=+\infty$, we say that the function $U$ satisfies the Inada conditions.

\noindent
In the traditional framework, for a specified future date $T$ which is the investment horizon, an agent reflect its
preferences as a utility function, allowing it subsequently to select an optimal strategy, using the expected utility
criterion. Thus the investor will follow this strategy, which is strongly dependent on $ T $, for the future period until
maturity. Note in passing that this function is chosen independently from the investment universe and therefore can not be
adapted, in the future, to potential crises or events that may have a considerable impact on the market analysis of the
investor.

\noindent
The class of stochastic utilities $U$, studied in this paper, are also used in behavioral modeling of economic agents but
evolve dynamically in time. For this reason,  a stochastic utility $U$ is  a c\`adl\`ag\footnote[1]{Right continuous with
left hand limits.} random field \footnote[2]{
A generalization of a stochastic process such that the underlying parameter need no longer be a simple real or integer valued "time", but can instead be take values that are multidimensional vectors, or points on some manifold.}  interpreted as  a collection of $\R$-valued random variables $U(t,x)$ indexed by the time $t$ and a spacial parameter $x$ and satisfying in $x$ the classical properties of utility function. In particular, we only suppose that the first derivative exists in the classical sense, and is a continuous function. For notational simplicity, the derivative of some regular function $f$ is denoted by $f_x (x):=\frac{\partial}{\partial x}f(x)$.

\begin{df}
Given a initial utility function $U(0,x)=u(x)$, a {\em progressive utility $U$} is a c\`adl\`ag random field  $U(t,x)$ such
that the following properties hold true on a subset $\Omega^1 \in \F$ such that $\P(\Omega^1)=1$
\begin{description}
\item[(i)] For all $(t,\omega)\in[0,T_H]\times \Omega^1$ the mapping $x \mapsto U(t,x,\omega)$ from $\R$ into $\R$ is an  increasing strictly concave function  (in short utility function) of class $\Cc^1$; we also assume the positive progressive random field $U_x(t,x):=\frac{\partial}{\partial x}U(t,x)$ to be c\`adl\`ag. 
\item[(ii)] Path regularity:  For any $(t,\omega)\in[0,T_H] \times \Omega^1$ and $x\in \mathbb R$, the function $t \mapsto U(t,x,\omega)$ is c\`adl\`ag on $ [0,T_H] $ 
%
%
 \end{description}
Finally, the random field $U(t,x)$ satisfies the  Inada conditions if  for any $(t,\omega)\in[0,T_H] \times \Omega^1$, the function $x\mapsto U(t,\omega,x)$ satisfies the  Inada conditions in the above classical sense.

\end{df}

\noindent
Obviously, this very general definition of progressive utility has to be constrained to represent, possibly changing over time, the individual preferences of an investor in a given financial market. The idea  is to calibrate these utilities with regard to some convex subclass (in particular vector space) of permitted
 processes $X$, denoted by $\GX$, on which utilities  may have more properties.

This class $\GX$ is a general class. As the initial condition of test-processes will play a central role in this work, some subclasses of $\GX$ should be defined.  
\begin{itemize}
\item The set of all test-processes $X$ starting from the same initial condition $x$ is denoted by $\GX(x):=\{X\in \GX:~X_0=x\},~x\in \R$.
\item Let $\tau$ be a stopping time,  a $\F_\tau$-measurable random variable $\eta$ is said to be $\tau$-attainable if there exists $X\in \GX$ such that $X_\tau=\eta$ a.s.
\item A process $X$ is said to be an admissible test-process if $X\in \GX$. Furthermore,  a  process $X(\tau,\eta)$ starting at time $\tau$ from $\eta$  is said to be an admissible test-process, and we write $X(\tau,\eta)\in \GX(\tau,\eta)$, if there exists $X\in \GX$ such that $X_\tau=\eta$, and $X_{\vartheta}=X_{\vartheta}(\tau,\eta)$ a.s. for $\vartheta\geq \tau.$
\end{itemize}

\subsection{Definition of $\GX$-consistent Stochastic Utilities.}
%
We now recall the concept of consistent utilities which has been introduced by M. Musiela and T. Zariphopoulou
\cite{zar-03,zar-07} under the name "forward utilities", also called "forward performance processes".

Traditionally, the  measuring of the  performance of investment strategies by  expected utility criteria is based on  a priori specification of  a  deterministic, concave and increasing function of terminal wealth at  fixed future time.  In addition to the fact that there is no  clear idea how to specify the utility  (usually defined in isolation
to the investment opportunities) and the fact that explicit solutions to optimal investment problems can only be derived
under very restrictive model, the optimal strategy (if exists)  is strongly dependent on the investment horizon. This not
only
limits the applicability of such criteria but also poses potential inter-temporal inconsistency problems.

\noindent
Herein, an alternative that alleviates the horizon dependence, but as mentioned the notion of progressive utility on which we are interested in this paper is very large and need to be  calibrated to the convex class $\GX$. This class  is a  class of test portfolios  which  only allows to define the stochastic utility. Once his utility  defined, an investor can then turn to a portfolio optimization  problem  on the general financial market to establish his optimal strategy or  to calculate indifference prices.

\noindent
At this stage, one can ask how the class $ \GX $ is used to characterize the class of stochastic utilities? The answer is in the choice of this class and its interpretation: In finance,  $\GX$ is chosen  because  it is rather rich with high liquidity, so that  the investor is able to specify his preferences. Second, the investor have  no interest to invest  in this class and  for this reason  he  use it only to define his utility. Mathematically this latter point "no interest to invest  in this class" translates in: a supermartingale property for an arbitrary investment strategy, in other terms for any $t$-attainable wealth $X_t$ and any $X\in \GX(t,X_t)$ 
\begin{equation}
\mathbb{E}(U(s,X_s) /{\cal F}_t)\leq
U(t,X_t),  \text{~a.s.}
\end{equation} 
Finding this insufficient to characterize  stochastic utilities, we further assume that there exist a test benchmark $X^*$  for which $U(t,X^*_t)$ is a martingale. 

\noindent
Note that, this  properties: a supermartingale for an arbitrary investment strategy and a martingale at an optimum,  are also satisfied  by the value functions of the traditional problem and are a natural consequences of the dynamic programming principle.

\noindent
As the consistent utilities can be interpreted as a generalization of these value functions, in this dynamic framework, we will imposes that this properties are satisfied at any stopping time  $\tau$ starting from any $\tau$-attainable process $X_\tau$. The economic interpretation of this point is "It is never too late to optimize".
Finally in contrast to the classical framework the datum is fixed for today and not for a future time.

\begin{df}[{\bf $\GX$-consistent Utility}]\label{defUF}
A $\GX$-consistent stochastic utility  process $U(t,x)$  is a  progressive utility with the following properties:
\begin{itemize}
\item{\bf Consistency with the test-class} For  any  stopping time $\vartheta$  and  
any test  process  $X \in \GX$ s.t.  $\mathbb{E}(U(\vartheta,X_\vartheta)^+)<+\infty$, we have 

\centerline{ $\qquad \mathbb{E}(U(\vartheta,X_\vartheta) /{\cal F}_\tau)\leq
U(\tau,X_\tau)$, a.s. for any stopping time $ \vartheta\geq \tau$.}

\item {\bf Existence of benchmark process} For   any pair of stopping time and test-process $(\tau,X_{\tau})$, the constraint is saturated: that is there exists an  optimal-benchmark  process $X^*\in \GX,\text{ such that }
X^*_\tau=X_\tau$, i.e. $X^*(\tau,\eta)\in \GX(\tau,\eta)$ , and  $U(\tau,X^*_\tau)=\mathbb{E}(U(\vartheta,X^{*}_\vartheta)/{\cal F}_\tau) $  a.s. for any stopping time $ \vartheta\geq \tau$.
\end{itemize}

\noindent
In short for any test-process $X\in \GX$,
$U(t,X_t)$ is a g-supermartingale and a martingale  for the
optimal-benchmark process $X^*$.

\end{df}
In the following, the set of $\GX$-consistent stochastic  utilities will be denoted by $ {\cal U(\GX)}$ and by $\cal{ A}
\cal{U}(\GX)$ the subset of affine $\GX$-consistent stochastic utilities.

\noindent

The existence of benchmark is a strong assumption. We refer the reader to Zitkovic \cite{zitkovic} who  recently, in the case where $\GX$ is $\PX$ the set of all positive wealth processes, has taken the above property as the definition of consistent utility, by removing the assumption that the benchmark-optimal  wealth $X^*$ exists. He has found the necessary and sufficient condition under which $U$ is consistent utility. We do not consider the problem of existence of the benchmark process in this paper but as the the growth optimal portfolio (GOP) in Platen et al \cite{MR2267213}, \cite{MR2194899} properties of $X^*$ plays a crucial role in the sequel.

Note that condition $\mathbb{E}(U(\vartheta,X_\vartheta)^+)<+\infty$ leaves open the possibility that the conditional expectation $\mathbb{E}(U(t,X_t))=\mathbb{E}(U(t,X_t)^+)-\mathbb{E}(U(t,X_t)^-)$ takes the value $-\infty$ with positive probability.

\noindent
The important novel feature of our definition of consistent dynamic utilities
and this is where our notion differs from that in the work of Musiela and Zariphopoulou \cite{zar-03,zar-07},
Tehranchi et al. \cite{Mike} and Zitkovic \cite{zitkovic} is that:  First, this version of stochastic utilities is more
coherent with the financial  market in the sense that it allows, at each date $\vartheta$, to catch up and thus achieve an
optimum even if up to this date $\vartheta$ we have not made the best investment choices.
Second,  the test-processes $X$ are not required to be discounted; this variation opens the door to a more general analysis as the question of numeraire change. Third, the notion of class-test, that has not been introduced in the previous literature gives more  sense  to the notion of progressive "forward" utility, as explained above.

Note also that in the literature, consistent stochastic utilities are, in general, defined on a more large sets which are linear spaces for example $\PX$ (the set of all positive wealth processes). But one might wonder what remains to optimize after having built the utility.

\paragraph{Affine  $\GX$-consistent utilities}\label{Exists}
The purpose of this paragraph is to investigate the  affine $\GX$-consistent utilities. Note that, of course, is the simplest example of stochastic utilities but remember that any concave function is a limit of affine functions, therefore this example is very important. 

\noindent
Next result,  shows that the concept of $\GX$-consistent utilities is not vacuous and gives a sufficient condition under
which there is at least one $\GX$-consistent utility.

\begin{theo}\label{ExistsTHa}
Let $(Y_t)_t$ a positive adapted process and $(Z_t)_t$ an adapted   process, the random field $\bar{U}(t,x):=Y_tx+Z_t$ is  $\GX$-consistent utility, that
is $\bar{U}\in \cal{ A} \cal{U}(\GX)$, if and only if there exist $X^*\in\GX$ and a martingale $(M_t)_t$ such that,  
\begin{itemize}
\item[(i)]  $\bar{U}(t,x):=Y_t(x-X^*_t)+M_t$ a.s. with $M_0= Y_0X_0^*+Z_0$.
\item[(ii)] For any stopping time $\tau$ and a  $\tau$-attainable random variable $\eta$, $X^*(\tau,\eta)\in \GX(\tau,\eta)$.
\item[(iii)]  $Y_t=\bar{U}(t,X^*_t)$ a.s. and satisfies: for all $X\in \GX$, the process $\Big(Y_t\big(X_t-X^*_t\big)\Big)_{t\ge \tau}$ is a g-supermartingale and martingale for $X_.=X^*$.
\end{itemize}

\end{theo}
\begin{proof}
Suppose that the random field $\bar{U}(t,x):=Y_tx+Z_t$ is a  $\GX$-consistent utility, then by definition there exists an
optimal process $X^*$ such that 
$\bar{U}(t,X^*_t)=Y_tX^*_t+Z_t$ is a martingale and for any test-process $X\in \GX$,
$\bar{U}(t,X_t$ is  a g-supermartingale. This implies, writing that
$$\bar{U}(t,X_t)=Y_tX_t+Z_t=Y_t(X_t-X^*_t)+Y_tX^*_t+Z_t$$ and denoting by
$M_t:=Y_tX^*_t+Z_t=\bar{U}(t,X^*_t)$, that $(M_t)_{t}$ is martingale and
$\Big(Y_t(X_t-X^*_t)\Big)_{t}$ is a g-supermartingale for any test-process $X$ which prove the
direct implication. The reverse implication is trivial.
\end{proof}
\begin{req}
 Assertions of Theorem \ref{ExistsTHa}, can be easy rewritten in the following dynamic version:
 \begin{itemize}
\item[(i)]    $M_\tau(\tau,\eta)= \bar{U}(\eta,\tau)=Y_\tau\eta+Z_\tau$.
\item[(ii)]  $X^*(\tau,\eta)\in \GX(\tau,\eta)$.
\item[(iii)]  $\Big(Y_t(\tau,\bar{U}(\tau,\eta))\stackrel{a.s.}{=}\bar{U}(t,X^*_t(\tau,\eta))\Big)_{t\ge \tau}$ and satisfies for all $X(\tau,\eta')\in \GX(\tau,\eta')$, the process $\Big(Y_s\big(X_s(\tau,\eta')-X^*_s(\tau,\eta)\big)\Big)_{s\ge \tau}$ is a g-supermartingale and martingale for $X(\tau,\eta)=X^*(\tau,\eta)$ a.s.
\end{itemize}
\end{req}

\noindent
 The last assertion of Theorem \ref{ExistsTHa} (equivalently $(iii)$ of the remark above) is fundamental. Existence of $ \GX
$-consistent utility requires the existence of a second process $\bar{Y}$, in addition  to an optimal test-process $\bar{X}$,
such that for any $X\in \GX$, the process
$\Big(\bar{Y}_t(X_t-\bar{X}_t))\Big)_t$ is a g-supermartingale.

\noindent
To better understand the role played by $ \bar{Y} $ and in order to successfully conclude our study, for any  stopping time $\tau$, a random variable $\eta$ $\tau$-attainable and   $\bar{X}(\tau,\eta)\in \GX(\tau,\eta)$, we denote by  $\GY_{\bar{X}(\tau,\eta)}$ and $\GY_{\bar{X}}$  the  sets given by
$$\GY_{\bar{X}(\tau,\eta)}:=\{Y\ge 0 :(Y_t(X_t(\tau,\eta)-\bar{X}_t(\tau,\eta)))_{t\ge \tau} \text{ is a  g-supermartingale, }\forall X(\tau,\eta) \in \GX(\tau,\eta) \}$$
$$\GY_{\bar{X}}:=\{Y\ge 0 : Y\in \GY_{\bar{X}(\tau,\eta)},~\forall (\tau,\eta)\}.$$

\noindent
In convex analysis, see R.T. Rockafellar \cite{Rock}, the set $ \GY_{\bar{X}}$ (resp. $\GY_{\bar{X}(\tau,\eta)}$ ) is called
the normal cone to $ \GX $ in $ \bar{X} $ (resp. to $\GX(\tau,\eta)$ in $\bar{X}(\tau, \eta)$), it is a
generalization of the concept of the dual cone. 
The reader may naturally ask the meaning of $\GY_{\bar{X}}$ ($\GY_{\bar{X}(\tau,\eta)}$), as usual the space of dual
processes do not depend on the benchmark process $ X^* $ and it's initial conditions. This dependence is mainly related to
the  structure of $ \GX$. In particular if $ \GX $ is homogeneous,  that is for any $\lambda>0$, $\lambda\GX\subset \GX$, it
is easy to see that $\GY_{\bar{X}}$ is equal to $\GY$ the set of all positive processes $Y$ such that $YX$ is a
supermartingale for all $X\in \GX$, and if $\GX$ is the set of all wealth processes  uniformly bounded by bellow, then
$\GY_{\bar{X}}$  is the set of equivalent local martingale $\M$. 

\noindent
We will see in  Section  \ref{Duality} that $ \GY $ is the analogue of $\X$  in the dual problem, but what is very
important, and we want immediately to report it  is the fact that the existence of a consistent  utility  is
strongly linked to the fact that the set $ \GY $ is empty or not. The case of linear utilities above  is a good example to
highlight this point.

\noindent
The following corollary is then a direct consequence of previous Theorems 
\begin{cor}
 $\cal{ A} \cal{U}(\GX)\neq \emptyset$ if there exist $\bar{X}\in \GX$ s.t. $\GY_{\bar{X}}\neq\emptyset$. Moreover, for any
martingale $(M_t)_t$ and any $\bar{Y}\in \GY_{\bar{X}}$ the random field $\bar{U}(t,x):=\bar{Y}_t(x-\bar{X}_t)+M_t$ is in
$\cal{ A} \cal{U}(\GX)$.
\end{cor}

%
%


\noindent
\paragraph{Consistent utilities and value function}
An obvious question naturally arises: Is the definition of stochastic utilities do not look like a problem of optimization?

\noindent
The answer is  immediate, according to this definition, the utility process $U$ satisfies for any pair $\tau \leq \vartheta$ of stopping times,
 $$U(\tau,X^{*}_\tau)=\esssup_{X \in \GX: X_\tau=X^*_\tau }\mathbb{E}(U(\vartheta,X_{\vartheta})/{\cal
F}_\tau)\text{~a.s.} $$
The utility process is then defined from an optimization program but only on the  class $\GX$. This may seem surprising, but it is important to note that the consistent utilities $U$ are a kind of generalization of the value function $v$ of the classical portfolio optimization program which are also a solution (where $\GX$ is the set of all wealth processes  uniformly bounded by bellow) of similar identity, as it is showed by W. Schachermayer in \cite{MR2014244}.  Indeed, for a classical optimization program with maturity $T$,  the dynamic programming principle, reads as follows: for  any pair $\tau \leq \vartheta$  of $[0,T]$-valued stopping times we have 
$$v_\tau(X_\tau)=\esssup_{X \text{ admissible}: X_\tau=X^*_\tau }\mathbb{E}(v_\vartheta(X_{\vartheta})/{\cal
F}_\tau)\text{~a.s.}$$

 \subsection{Test Processes}
Deliberately, no details on the class of test-processes  $ \GX $ is given previously, because no more is needed   to define stochastic utilities. But to carry out our study, a minimum of properties are required. 

\begin{hp}\label{SB}
\rmi{\bf Convexity:}The class $\GX$ is closed and convex in the sense that is 
 $$\eps X^{1}(\tau,\eta^{1}) +(1-\eps) X^{2}(\tau,\eta^{2}) \in   \GX(\tau,\eps \eta^{1}+(1-\eps)\eta^{2})~a.s.$$  
holds for any stopping time $\tau$, any $\eta^{1},\eta^{2}$ $\tau$-admissible random variables, $X^{1}\in
\GX(\tau,\eta),~X^{2}\in
\GX(\tau,\eta')$ and   $\eps\in [0,1]$. \\
\rmii {\bf Switching property:}  For any  test-processes $X^{1}$ and $X^{2}$ in $\GX$ and 
all stopping time  $\tau$, denoting by $A_\tau$ the event $A_\tau:=\{\omega: X^{1}_\tau(\omega)=X^{2}_\tau(\omega)\} $,  the process
$\hat{X}$ defined by 
$\hat{X}_t:=X^{1}_{t\wedge\tau}+X^{1}_{t\vee\tau}\ind_{\Omega\backslash A_\tau}+X^{2}_{t\vee\tau}\ind_{A_\tau}$ is also an
element of $\GX$. 

\end{hp}
These properties, assumed to be  satisfied by definition are a kind of guarantee to
ensure that the portfolio constraints  are  large enough and not reduced to singleton. This is very important to hope  find
solutions to our problem. We refer the reader to El Karoui \cite{ElKaroui} chap $1.$ for more details on the last point and
its role in control problems  in full generality and to \cite{kardaras} for the role of this hypothesis and its application in the financial investment optimization problem . \\
{\bf Financial Interpretation of the Set $\GX$:} A set $\GX$ that satisfies Assumption \ref{SB}  can be thought as modeling the wealth processes that
are available to some agent in a financial market. If an agent can invest at time $\tau$ in two wealth processes $X^1 \in  \GX$ and $X^2 \in \GX$, the agent should be free to allocate at time $t = 0$ a fraction $\eps\in [0,1]$ of the unit initial capital to wealth $X^1$ and the remaining fraction to the wealth $X^2$. The switching property has the following economic interpretation: if an agent can invest
in two wealth processes $X^1\in\GX $   and $X^2\in \GX$, we should then allow for the possibility that, starting with the wealth process
$X^1$, at time $\tau$ the agent decides to either switch to the wealth process $X^2$, which happens
on $A_\tau \in {\cal F_\tau}$, or keep investing according to $X^1$, on the event $\Omega \backslash A$.

\subsection{Optimality Conditions.}\label{OPCOND}
The purpose of this paragraph is to exploit the definition of consistent stochastic   utilities and to bring the properties and consequences it implies. Optimality conditions  established later in this paragraph are the key properties on which we rely to establish the main results of this paper. In particular we will show in Section \ref{NASF} that these necessary conditions are sufficient to establish the existence of stochastic utilities.

\begin{theo}[Pontryagin's Maximum Principle]\label{proprietes}
Let  $U$ be an $\GX$-consistent stochastic utility with optimal-benchmark  process $X^*$. Let $\tau$ a stopping time and a random variable $\eta$  $\tau$-attainable, then:

\noindent
If the  convex set $\GX$ is homogeneous that is for any $\lambda>0$ and any $X\in \GX$ the process  $\lambda X$ still in
$\GX$, then  
\begin{description}
\item[(i)] The process $\big(X_{t}^*(\tau,\eta)U_{x}(t,X_{t}^*(\tau,\eta))\big)_{t\ge\tau}$ is a martingale.
\item[(ii)] For any  $\tau$-attainable random variables $\eta,\eta'$, and any test-process $X\in \GX(\tau,\eta')$, the process $\big(X_{t}(\tau,\eta')U_{x}(t,X_{t}^*(\tau,\eta))\big)_{t\ge \tau}$
is a g-supermartingale.
\end{description}
\noindent
Else, $\GX$ is only assumed to satisfies Assumption \ref{SB}, 

\noindent
{\bf (OC)} For any  $\tau$-attainable random variables $\eta,\eta'$ and 
for any $X(\tau,\eta')\in \GX(\tau,\eta')$ the process
 $\big((X_{t}(\tau,\eta')-X_{t}^*(\tau,\eta))U_x(t,X_{t}^*(\tau,\eta)),~t\ge \tau\big)$ is a g-supermartingale.

\end{theo}
Before proceeding to the proof of this result, it is interesting to note that this optimality conditions established in a general way  are  quite different from those of \cite{MRADNEK01}. Indeed, in the last paper the process $U_{x}(t,X_{t}^*)_{t\ge  \tau}$ is a state density process, in turn for any test-process $X$, $X_tU_x(t,X^*_t)$ is a local martingale and a martingale if $X=X^*$. This is due essentially to the structure of the class $\GX$ which is, only, assumed to be convex in the present paper and $\GX=\PX$ (set of all positive wealth processes) in   \cite{MRADNEK01}.
%

\begin{proof}

\noindent
To verify the above assertions observe, by convexity of $\GX$,  that for any test-process  $X(\tau,\eta')\in \GX(\tau,\eta')$ and
any $\eps\in [0,1]$, the process $\eps \big(X(\tau,\eta')-X^*(\tau,\eta)\big)+X^*(\tau,\eta')$ is a permitted  test  process  in $\GX(\tau,\eps (\eta'-\eta)+\eta)$,  starting from    $\eps (\eta'-\eta)+\eta$ at time $t=\tau$. For simplicity let us denote by $ \triangle X(\tau)$ the process given by $ \triangle X_.(\tau):=X_.(\tau,\eta')-X^*_.(\tau,\eta)$. Consequently,
by consistency property with the class $\GX$ and by martingale property of $U(.,X_.^*(\tau,\eta))$, it follows  for $\theta\ge \alpha \ge \tau$
\begin{eqnarray}\label{fehd}
&&\E\big(U\big(\theta,X_{\theta}^*(\tau,\eta)+\eps  \triangle X_\theta(\tau)\big)-U(\theta,X_{\theta}^*(\tau,\eta))/\F_\alpha\big)\nonumber \\ 
&&\le U\big(\alpha,X_{\alpha}^*(\tau,\eta)+\eps  \triangle X_\alpha(\tau)\big)-U\big(\alpha,X_{\alpha}^*(\tau,\eta)\big)\text{~a.s.}.
\end{eqnarray}

\noindent
Divide by $\eps>0$ and denote, for any $t,~\eta$ and $\eta'$, $f(\theta,.)$ the functional   
\begin{eqnarray*} 
f(\theta,\eps):=\frac{1}{\eps}\Big[U\big(\theta,X_{\theta}^*(\tau,\eta')+\eps  \triangle X_\theta(\tau)\big)-U\big(\theta,X_{\theta}^*(\tau,\eta')\big)\Big],
\end{eqnarray*}

\noindent
and observe that $\big(f(\theta,\eps),~\theta\ge \tau\big)$ is, by inequality \eqref{fehd}, a g-supermartingale satisfying, from the monotonicity of $U$, the following  $$f^+(\theta,\eps)=f(\theta,\eps)\ind_{ \triangle X_\theta(\tau)\ge 0}\text{ and }f^-(\theta,\eps)=-f(\theta,\eps)\ind_{ \triangle X_\theta(\tau)\le 0}$$

\noindent
From the derivability assumption of $U$, for any $\theta\ge \tau$, $f(\theta,\eps)$ goes to $f(\theta,0)$ when $\eps\mapsto 0$. By this, the right hand side of last inequality converge almost surely to $f(\theta,0)= \triangle X_\theta(\tau)U_x(t,X_{\theta}^*(\tau,\eta))$.
To conclude, it remains to justify the passage to the limit under the expectation. To this end,  remark that by concavity and
the increasing property of  $U(\theta,.)$ ,
$\eps\mapsto f(\theta,\eps)$ is a decreasing function with  the same  sign as $ \triangle X_\theta(\tau)$.
 Then, on the set $\{ \triangle X_\theta(\tau)\ge 0\}$, $f(\theta,\eps)$ is positive  and decreases to $f(\theta,0)$. Letting
$\eps\searrow 0$, the conditional monotone convergence theorem implies 
$$
\E\big(f^+(\theta,\eps)/\F_\alpha\big)=\E\big(f(\theta,\eps)\ind_{ \triangle X_\theta(\tau)\ge 0}/\F_\alpha\big)
\longrightarrow \E\big( f(\theta,0)\ind_{ \triangle X_\theta(\tau)\ge 0}/\F_\alpha\big)
$$

\noindent
On the other hand,  on the set $\{ \triangle X_\theta(\tau)\le 0\}$, $-f(\theta,\eps)$ is positive  and increase to
$-f(\theta,0)$.  Applying the  dominated convergence theorem, we get   for $ \theta\ge \alpha \ge \tau$
$$
\E\big(-f^-(\theta,\eps)/\F_\alpha\big)=\E\big(f(\theta,\eps)\ind_{ \triangle X_\theta(\tau)\le 0}/\F_\alpha\big)
\longrightarrow \E\big( f(\theta,0)\ind_{ \triangle X_\theta(\tau)\le 0}/\F_\alpha\big) \text{~a.s.}
$$

\noindent
This justifies the passage to the limit on the inequality  (\ref{fehd}). Hence, it follows that 
\begin{eqnarray}\label{ineq01}
\E\Big(\big(X_{\theta}(\tau,\eta')&-&X_{\theta}^*(\tau,\eta)\big)U_{x}(\theta, X_{\theta}^*(\tau,\eta))/\F_\alpha\Big)\nonumber\\ 
&\le& \big(X_{\alpha}(\tau,\eta')-X_{\alpha}^*(\tau,\eta)\big)U_{x}(\alpha, X_{\alpha}^*(\tau,\eta))\text{~a.s.}
\end{eqnarray}

\noindent
Which proves {\bf (OC)}. \\
Let, now,  focus on  the case where the  convex set  $\GX$ is homogeneous. In this case  the  stability property of $\GX$ by   positive multiplication implies that for  any $\eps >-1$, the process  $(1+\eps )X^*(\tau,\eta)\in \GX(\tau,(1+\eps )\eta)$ still permitted  and hence,
 by the same argument as above, we deduce  for  $-1<\eps<0$ respectively  $\eps >0$,  the following inequalities
\begin{eqnarray*}
\frac{1}{\eps}\E\Big(U\big(\theta,(1+\eps)X_{\theta}^*(\tau,\eta)\big)&-&U\big(\theta,X_{\theta}^*(\tau,\eta)\big)/\F_\alpha\Big)\\&\ge& \frac{1}{\eps}\Big(U\big(\alpha,(1+\eps)X_{\alpha}^*(\tau,\eta) \big)-U\big(\alpha,X_{\alpha}^*(\tau,\eta)\big)\Big) \text{~a.s.}
\end{eqnarray*}
and similarly
\begin{eqnarray*}
 \frac{1}{\eps}\E\Big(U\big(\theta,(1+\eps)X_{\theta}^*(\tau,\eta)\big)&-&U\big(\theta,X_{\theta}^*(\tau,\eta)\big)/\F_\alpha\Big)\\&\le& \frac{1}{\eps}\Big(U\big(\alpha,(1+\eps)X_{\alpha}^*(\tau,\eta) \big)-U\big(\alpha,X_{\alpha}^*(\tau,\eta)\big)\Big)\text{~a.s.}
\end{eqnarray*}

\noindent
Passing to the limit $\eps \rightarrow 0$, yields respectively
$$
\E\big(X_{\theta}^*(\tau,\eta)U_{x}(\theta, X_{\theta}^*(\tau,\eta))/\F_\alpha\big)\ge X_{\alpha}^*(\tau,\eta)U_{x}(\alpha,X_{\alpha}^*(\tau,\eta)),\text{~a.s.} ~\forall~ \theta\ge \alpha \ge \tau
$$
$$
\E\big(X_{\theta}^*(\tau,\eta)U_{x}(\theta, X_{\theta}^*(\tau,\eta))/\F_\alpha\big)\le X_{\alpha}^*(\tau,\eta)U_{x}(\alpha,X_{\alpha}^*(\tau,\eta)), \text{~a.s.}~\forall~ \theta\ge \alpha \ge \tau,
$$
then we have
$$
\E\big(X_{\theta}^*(\tau,\eta)U_{x}(\theta, X_{\theta}^*(\tau,\eta))/\F_\alpha\big)= X_{\alpha}^*(\tau,\eta)U_{x}(\alpha,X_{\alpha}^*(\tau,\eta)), \text{~a.s.}~\forall ~\theta\ge \alpha \ge \tau.
$$

\noindent
We have thus proved assertion $(i)$. Reconciling $(i)$ and {\bf (OC)} yields $(ii)$.   
\end{proof}

\subsection{Duality.}\label{Duality}
The use of convex duality in utility maximization and optimal stochastic control in general has proven extremely fruitful.
As it is established in  \cite{MRADNEK01}  analysis of utility random fields is no exception, the process $U_x(t,X^*_t),~t\ge0$ is a state price density process which is optimal to some dual problem. The idea here is  to adopt a similar approach by duality in order to prove the dual optimality of $U_x(t,X^*_t),~t\ge0$. This will   support the intuition  and  allows us a constructive intuition on different difficulties encountered in the study  of consistent progressive utilities.

\noindent
We start with a straightforward translation of the well-known Fenchel-Legendre conjugacy to
the random field case.
For a utility random field $U$ we define the dual random field $\tilde{U}: [0,+\infty[\times [0,+\infty[\times\Omega $, by
\begin{eqnarray}\label{eq:I14}
 \tilde{U}(t,y) \stackrel{def}{=}\max_{x\in \Q^*}\Big(U(t,x)-xy\Big),~{\text for }~ t\ge0,~ y\ge0 
\end{eqnarray}

\noindent
 By a simple derivation with respect to $x$,  the maximum is achieved at  $x^{*}_t=(U_x)^{-1}(t,y)=-\tU_y(t,.)$, where $(U_x)^{-1}(t,y)$ denote the inverse function of $U_x(t,.)$ with respect to the spacial parameter $x$. In turn
 \begin{eqnarray}\label{eq:I15}
 \tilde{U}(t,y) =U(t,(U_x)^{-1}(t,y))-y(U_x)^{-1}(t,y)
\end{eqnarray}

\noindent

As mentioned above, the purpose of this paragraph is to study the dual problem, for that we first specify  the set of the 
dual processes, on which one optimizes.  Unsurprising, the dual set for a given $\GX$-consistent utility with optimal
process $X^*$ is  $\GY_{X^*}$ introduced in paragraph \ref{Exists}.
From  Theorem \ref{proprietes}  $\GY_{X^*}$ is the set of potential candidates $Y$ to play the role of  $\big(U_x(t,X^*_t)\big)_t$ .

\noindent
As in the primal problem, the initial condition of the dual processes will play an important role. Then, to  formulate the
dual problem,  for  a stopping time $\tau$, a $\tau$-attainable random variable $\eta$ and $y>0$, we define  the class
$\GY_{X^*(\tau,\eta)}(\tau,y)$ by
$$\GY_{X^*(\tau,\eta)}(\tau,y):=\{Y(\tau,y) \ge 0 :Y(\tau,y)\in\GY_{X^*(\tau,\eta)},~Y_\tau(\tau,y)=y \}$$
which contains, for  $y=U_x(\tau,\eta)$, the process $\big(U_x(t,X^*_{t}(\tau,\eta))\big)_{t\ge\tau}$.  

\noindent
As for test-portfolio, for a stopping time $\tau$, we introduce in the following definition the $\tau$-achievability of a
dual random variable $\kappa$. 
\begin{df}\label{kappatau}
For a stopping time $\tau$, a random variable  $\kappa$ is $\tau$-achievable if there exists a $\tau$-attainable r.v.  $\eta$ such that $\kappa=U_x(\tau,\eta)$ a.s. 
\end{df}

\noindent
The goal of this section is now, the proof of the following theorem :

\begin{theo}[Duality]\label{EDPSNHDuale'}
Let $U$ be a stochastic consistent utility with optimal-benchmark  process $X^*$. Then the convex conjugate  $\tilde{U}$  of an $\GX$-Consistent  utility $U$, given by 
(\ref{eq:I14}), satisfies 

\begin{itemize}
\item[(i)] for any $t\ge 0$, $y\mapsto \tU(t,y)$ is convex decreasing function.
\item[(ii)] for any pair $\tau \leq \vartheta$ of stopping times, for any $\tau$-attainable random variable  $\eta$  and for any $Y(\tau,\kappa)\in \GY_{X^*(\tau,\eta)}(\tau,\kappa)$, we have for $\kappa >0$ 
\begin{eqnarray}\label{InedualeA}
\E(\tU(\vartheta,Y_\vartheta(\tau,\kappa))/\F_\tau)\ge U(\tau,\eta)-\kappa \eta+\sup_{X\in\GX(\tau,\eta)}\{\kappa \eta - \E(Y_\vartheta(\tau,\kappa)X_\theta(\tau,\eta)/\F_\tau)\},\text{~a.s.}
\end{eqnarray}
If  $\kappa$ is $\tau$-achievable with $\kappa=U_x(\tau,\eta)$, the quantity $U(\tau,\eta)-\kappa \eta$ in right side of this inequality  is replaced by $\tilde{U}(\tau,\kappa)$.

\item[(iii)] Assume the set $\GX$ to be homogeneous and  $\kappa$ to be  $\tau$-achievable with $\kappa=U_x(\tau,\eta)$ a.s. 
Then there exists a unique optimal process $Y^*_t(s,\kappa)$ s.t. 
\begin{eqnarray}\label{dualpb}
\tilde{U}(\tau,\kappa)= \E(\tilde{U}(\vartheta,Y^*_\vartheta(\tau,\kappa))/\F_\tau) = \inf_{Y(\tau,\kappa)\in \GY_{X^*(\tau,\eta)}(\tau,\kappa)} \E(\tU(\vartheta,Y_\vartheta(\tau,\kappa))/\F_\tau)\text{~a.s.}
\end{eqnarray}

Furthermore, $Y^*_\vartheta(\tau,U_{x}(\tau,\eta))=U_{x}(\vartheta,X^*_\vartheta(\tau,\eta))~ a.s. $ where we recall that $X^*_.(\tau,\eta)$ denote the optimal-benchmark  process  associated with $U$,  starting from the $\tau$-attainable capital $\eta$.
\end{itemize}
\end{theo}

\noindent
The reader should note the difference between assertions $(ii)$ and $(iii)$ of this theorem. 
Indeed, in the general case where $\GX $ is only assumed convex, the dual problem is much more complicated than the primal problem in itself. For example, we have  no idea or intuition about  the properties of processes $\big(\tU (t,Y_{t} (\tau,\kappa))\big)_{t\ge\tau} $ if they are sub or supermartingales. It is also not clear if $\tU (.,U_x(.,X^*))$ is a true martingale  or any semimartingale. Certainly the dual problem is ill posed and requires further investigation. If the set $ \GX $ is assumed convex and homogeneous, then  processes $YX$ are supermartingales and martingale for $X=X^*$ which implies that 
\begin{equation}
 \sup_{X\in\GX(\tau,\eta)}\{\kappa \eta - \E(Y_\vartheta(\tau,\kappa)X_\theta(\tau,\eta)/\F_\tau)\}=0\text{~a.s.}
\end{equation}

In this case, it is immediate that the processes $\big(\tU (t,Y_{t} (\tau,\kappa))\big)_{t\ge\tau} $ (if $\kappa$ is $\tau$-achievable ) are a submartingales and martingale for $Y=Y^*:=U_x(.,X^*)$.

Note also that the fact $\kappa$ is $\tau$-achievable plays a crucial role in this theorem. Within this assumption,
properties of submartingales and existence of an optimal dual process (in homogeneous case) are not satisfied. This is,
essentially , due to the fact that  sets $\GX(\tau,.)$ and $\GY_{X^*}(\tau,.)$    are not in perfect duality because
$(U_{x})^{-1}(.,\GY_{X^*}(\tau,.))\nsubseteq \GX(\tau,.)$, in general. In other terms, existence of solutions is intimately
related to the  inverse range of $U_{x}$,  i.e.  $(U_{x})^{-1}(.,\GY_{X^*}(\tau,.))$. For more details see \cite{KramkovAsy}
for the classical case of optimization problem. 
For example, if the range of the function $U_{x}(0,.)$ is the whole $\R^+$ (or such that asymptotic
elasticity, introduced in \cite{KramkovAsy}, is less than $1$ ) then  any $y>0$ is $0$-admissible which implies that for any
$y>0$ the dual problem (\ref{dualpb}) at $\tau=0$ (replacing $\tau$ by $0$) admits a unique solution.

\begin{req}
In the framework of \cite{MR2014244}, the identity \eqref{dualpb} is also satisfied by the convex conjugate $\tv$ of the value function of a classical optimization program with maturity $T$, that is for  any pair $\tau \leq \vartheta$  of $[0,T]$-valued stopping times, the following identity holds
$$\tilde{v}(\tau,\kappa)= \E(\tilde{v}(\vartheta,Y^*_\vartheta(\tau,\kappa))/\F_\tau) = \inf_{Y(\tau,\kappa)} \E(\tU(\vartheta,Y_\vartheta(\tau,\kappa))/\F_\tau)\text{~a.s.}$$
The proof is given in \cite{MR2014244}.
\end{req}

\begin{proof}
Assertion $(i)$ is a simple consequence of the definition of the convex conjugate. Let prove $(ii)$ and $(iii)$. 
 By definition of the Fenchel transform, it is immediate that for any $Y\in \GY_{X^*}$,

\begin{equation*}
 \tU(t,Y_t)\ge U(t,X^*_t)- Y_tX^*_t~~ \forall t\ge 0.
\end{equation*}

\noindent
For any $\tau$-attainable random variable $\eta$ and any test-process $X(\tau,\eta)\in \GX(\tau,\eta)$, one easily sees, using the definition  of   $\GY_{X^*(\tau,\eta)}(\tau,\kappa)$ and the martingale property of   $ (U(t,X^*_t))_{t\ge0}$, that 
\begin{eqnarray*}
\E( \tU(\vartheta,Y_\vartheta(\tau,\kappa))/\F_\tau)&\ge& \E(U(\vartheta,X^*_\vartheta(\tau,\eta))/\F_\tau)- \E(Y_\vartheta(\kappa)X^*_\vartheta(\tau,\eta)/\F_\tau)\\
&=&U(\tau,\eta)+ \E(Y_\vartheta(\tau,\kappa)\big(X_\vartheta(\tau,\eta)-X^*_\vartheta(\tau,\eta)\big)/\F_\tau)-\E(Y_\vartheta(\tau,\kappa)X_\vartheta(\tau,\eta)/\F_\tau) \\
&\ge& U(\tau,\eta)-\kappa \eta + \kappa \eta-\E(Y_\vartheta(\tau,\kappa)X_\vartheta(\tau,\eta)/\F_\tau)\text{~a.s.}
\end{eqnarray*}

\noindent
which is valid for any $X(\tau,\eta)\in \GX(\tau,\eta)$ and any $\eta$ $\tau$-attainable. Inequality \eqref{InedualeA} is then achieved by taking the supremum over $\GX(\tau,\eta)$, i.e., 
\begin{eqnarray*}
 \E(\tU(\vartheta,Y_\vartheta(\tau,\kappa))/\F_\tau)\ge U(\tau,\eta)-\kappa \eta+\sup_{X\in\GX(\tau,\eta)}\{\kappa \eta - \E(Y_\vartheta(\tau,\kappa)X_\theta(\tau,\eta)/\F_\tau)\},\text{~a.s.}~\kappa >0.
\end{eqnarray*}

Assume now that $\kappa$ is $\tau$-achievable with $\kappa=U_x(\tau,\eta)$ for some r.v. $\eta$, it follows by definition of the dual conjugate that 
\begin{equation*}
 U(\tau, \eta)-\ \kappa \eta= U(\tau,(U_x)^{-1}(\tau,\kappa))-\kappa(U_x)^{-1}(\tau,\kappa)=\tU(\tau,\kappa).
\end{equation*}

\noindent
Which proves $(ii)$. Now let turn to assertion $(iii)$. By   is homogeneity assumption of $\GX$ and  the  existence and 
$\tau$-achievability of $\kappa$ i.e.,  $U_{x}(\tau,\eta)=\kappa$ for some $\tau$-attainable r.v. $\eta$,  it follows,
denoting by $\big(X_t^*(s,\eta)\big)_{t\ge s}$ the  associated optimal process  that the process, that the process 
 $\big(Y^*_t(s,\kappa)\big)_{t\ge s}$ defined by
\begin{equation*}
Y^*_\vartheta(\tau,\kappa)=U_{x}(\vartheta,X^*_\vartheta(\tau,(U_x)^{-1}(\tau,\kappa)))>0.
\end{equation*}
is in the set $ \GY_{X^*}(\tau,\kappa)$ as by optimality conditions (Theorem \ref{proprietes}), for any $\tau$-admissible
$\eta'$ and any test-process $X\in \GX(\tau,\eta')$, the process $\big(X_{t}(\tau,\eta')Y^{*}_{t}(\tau,\kappa)\big)_{t\ge
\tau}$ is a g-supermartingale.

\noindent
Now rewriting the las identity in the following form
\begin{equation*}
(U_{x})^{-1}(\vartheta, Y^*_\vartheta(\tau,\kappa))=X_\vartheta^*(\tau,(U_{x})^{-1}(\tau,\kappa))
\end{equation*}

\noindent
which implies
\begin{equation*}
\tU(\vartheta, Y^*_\vartheta(\tau,\kappa))=U(\vartheta,X_\vartheta^*(\tau,(U_{x})^{-1}(\tau,\kappa)))-Y^*_\vartheta(\tau,\kappa)X_\vartheta^*(\tau,(U_{x})^{-1}(\tau,\kappa))).
\end{equation*}

\noindent
 One can easily deduce, since $U$ is  $\GX$-consistent stochastic utility, from the martingale property of processes
$\big(X_{t}^*(\tau,\eta)U_{x}(t,X_{t}^*(\tau,\eta))\big)_{t\ge \tau}$ and $\big(U({t},X_{t}^*(\tau,\eta))\big)_{t\ge \tau}$
and by definition of  $\big(Y^*_t(s,\kappa)\big)_{t\ge s}$,  that $\big (\tU(t, Y^*_{t}(\tau,\kappa))\big)_{t\ge
\tau}$ is also a true martingale.
Finally, for a stopping time $\vartheta\ge \tau$ using $(ii)$,

\begin{eqnarray*}
\inf_{Y(\tau,\kappa)\in \GY_{X^*}(\tau,\kappa)}\E( \tU(\vartheta,Y_{\vartheta}(\tau,\kappa))/\F_\tau)&\ge& \tU(\tau,\kappa)=\E( \tU(\vartheta,Y^*_{\vartheta}(\tau,\kappa))/\F_\tau)\\
&\ge& \inf_{Y(\tau,\kappa)\in \GY_{X^*}(\tau,\kappa)}\E( \tU(\vartheta,Y_{\vartheta}(\tau,\kappa))/\F_\tau)
\end{eqnarray*}

\noindent
Which achieves the proof.
\end{proof}

\subsection{Stability by numeraire change.}\label{TCN}

We saw in the previous sections, how optimality conditions, in non-homogeneous case,  which satisfy the $\GX$-consistent  utilities  are  not intuitive. Because it is more convenient  and more simpler to work with local martingales or g-supermartingales then semimartingales, the idea of this paragraph  is to simplify the test class $\GX$, which allow us to simplify the approach and to develop a constructive  intuition about this study. More clearly, consider, for example, the context of a financial market where $ \GX $ is a class of positive wealth processes  that are semimartingales. If the set of equivalent local martingales is not empty then applying the change of numeraire $1/M$ with $M$ is an equivalent local martingale, the new wealth are positive local martingales therefore supermartingales, which is an appropriate property to the study of consistent stochastic utilities. 

\noindent
The  goal of this paragraph is then to prove the following result.

\begin{theo}[Stability by numeraire change]\label{ThSNC}
~\\
Let $U(t,x)$ be a stochastic random field and let $Y$ be a positive semimartingale, and denote by $\GX^Y$ the class of processes defined by $\GX^Y=\{\frac{X}{Y},~X\in \GX\}$, then the process $V$ defined by  
\begin{eqnarray}\label{eq:8}
V(t,x)=U(t,xY_t)
\end{eqnarray}
is  $\GX^Y$-consistent stochastic utility  if and only if $U$ is an  $\GX$-consistent stochastic utility.
\end{theo}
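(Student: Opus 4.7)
The plan is to exploit the fact that the map $X \mapsto X/Y$ is, by the very definition of $\GX^Y$, a bijection between $\GX$ and $\GX^Y$, and that the algebraic identity $V(t, X_t/Y_t) = U(t, (X_t/Y_t) Y_t) = U(t, X_t)$ transports every probabilistic property across this bijection. The argument is symmetric in $U,\GX$ and $V,\GX^Y$ (since $U(t,x) = V(t,x/Y_t)$), so I only need to verify one direction.

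First I would check that $V$ is a progressive utility whenever $U$ is: positivity, adaptedness and continuity in $(t,x)$ follow because $Y$ is a positive semimartingale (hence continuous-in-$t$ adapted and strictly positive) and the composition $V(t,x) = U(t, xY_t)$ preserves these. For fixed $\omega, t$, the map $x \mapsto V(t,x)$ is obtained from the strictly increasing, strictly concave $U(t,\cdot)$ by precomposing with the linear positive scaling $x \mapsto xY_t(\omega)$, hence $V(t,\cdot)$ is strictly increasing, strictly concave, with $V_x(t,x) = Y_t \, U_x(t, xY_t)$ existing and continuous.

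Next I would verify the consistency-with-the-test-class condition. Given any $\tilde X \in \GX^Y$, write $\tilde X = X/Y$ for the unique $X \in \GX$. Then $V(t, \tilde X_t) = U(t, X_t)$ for all $t$, so integrability and the supermartingale inequality
\[
\mathbb{E}(V(t,\tilde X_t)\,|\,\F_s) = \mathbb{E}(U(t,X_t)\,|\,\F_s) \le U(s,X_s) = V(s,\tilde X_s)
\]
hold for $V$ along $\GX^Y$ if and only if they hold for $U$ along $\GX$. For the existence-of-optimal-wealth condition, observe that if $X^*$ is optimal for $U$ starting from $x > 0$, then $\tilde X^* := X^*/Y \in \GX^Y$ starts from $\tilde x := x/Y_0$ and $V(t,\tilde X^*_t) = U(t,X^*_t)$ is a martingale; conversely, if $\tilde X^* \in \GX^Y$ realises the martingale equality for $V$ from initial value $\tilde x$, then $X^* := Y \tilde X^* \in \GX$ starts from $\tilde x Y_0$ and achieves equality for $U$. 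As $\tilde x$ ranges over $(0,\infty)$ so does $x = \tilde x Y_0$, which gives the required optimal-wealth property for every admissible initial capital.

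There is no real obstacle here; the only point deserving care is ensuring that the admissibility notion is preserved, i.e.\ that $\GX^Y$ inherits the structural assumption (convexity of test-classes $\GX^Y(s,\eta)$) from $\GX$. This follows immediately since multiplication by $1/Y_s$ is a positive $\F_s$-measurable linear map, so convex combinations in $\GX^Y(s,\cdot)$ correspond, via multiplication by $Y$, to convex combinations in $\GX(s,\cdot)$, closing the argument. The reverse implication is then obtained by applying what has just been proved to the pair $(V, Y^{-1})$ in place of $(U, Y)$.
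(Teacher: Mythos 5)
Your proof is correct and follows essentially the same route as the paper: the bijection $X\mapsto X/Y$ between $\GX$ and $\GX^Y$ together with the identity $V(t,X_t/Y_t)=U(t,X_t)$ transfers the supermartingale inequality and the optimal-wealth martingale property verbatim, which is exactly the paper's argument. You are only somewhat more explicit than the paper about the converse direction (via the pair $(V,1/Y)$) and the preservation of convexity of the test class; the incidental remark that a positive semimartingale $Y$ is continuous in $t$ is inaccurate (it is only c\`adl\`ag) but plays no role in the argument.
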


\noindent
Roughly speaking, the theorem says, that the notion of $\GX$-consistent  stochastic utility is preserved by 
 numeraire change. In particular, in the case of financial market, for any
 equivalent martingale measure $M$, this theorem shows that studying $\GX$-consistent stochastic utilities
is equivalent to study the $\GX^M$-consistent utilities. The advantage, here, is that the new test-processes
  in $\GX^M$ are  local martingales (in particular a supermartingales if positives). From this point, 
 we can deep the study of our utilities in the new martingale market $\GX^M$.      
\begin{proof}
 To show this result it is enough to verify assertions of  definition \ref{defUF}.
\begin{itemize}
\item[$-$] Concavity  : for
$t\geq0$, $x \mapsto V(t,x)$ is increasing concave function, by definition .

\item[$-$] Consistency with the test-class $\GX^Y$: For
any test-process  $\tX \in \GX^Y$ and any pair $\vartheta \tau$ of stopping times, $\E\big(V(\vartheta,\tX_\vartheta)=U(\vartheta,X_\vartheta)\big)<+\infty $ a.s. and
$$\qquad \mathbb{E}(V(\vartheta,\tX_\vartheta)/{\cal F}_\tau)= \mathbb{E}(U(\vartheta,X_\vartheta)/{\cal F}_\tau)\leq
U(\tau,X_\tau)\stackrel{def}{=}V(\tau,\tX_\tau) $$

\item[$-$]  Existence of optimal-benchmark: Let $\tilde{\eta}$ be a $\tau$-admissible random variable. As $U$ is  $\GX$-consistent utility and $\eta=Y_\tau\tilde{\eta}$ is $\tau$-admissible r.v. in the initial market,  there exists an optimal-benchmark  process $X^*(\tau,\eta)\in \GX(\tau,\eta)$,
$$U(\tau,\eta)=\mathbb{E}(U(\vartheta,X^{*}_\vartheta(\tau,\eta))/{\cal F}_\tau)=\esssup_{X \in \GX(\tau,\eta)}\mathbb{E}(U(\vartheta,X_{\vartheta}(\tau,\eta))/{\cal
F}_\tau),~ \forall   \tau\leq \vartheta. $$
Taking $\tX^*(\tau,\tilde{\eta})=X^*(\tau,x)/Y$ yields, by definition of $V$ and that of $\GX^Y$ we get
\begin{eqnarray*}
V(\tau,\tilde{\eta})&=&U(\tau,\eta)=\mathbb{E}(U(\vartheta,X^{*}_\vartheta(\tau,\eta))/{\cal F}_\tau)=\sup_{X \in \GX(\tau,\eta)}\mathbb{E}(U(\vartheta,X_{\vartheta}(\tau,\eta))/{\cal
F}_\tau)\\
&=&\mathbb{E}(V(\vartheta,\tX^{*}_\vartheta(\tau,\tilde{\eta}))/{\cal F}_\tau)=\sup_{\tX \in \GX^Y(\tau,\tilde{\eta})}\mathbb{E}(V(\vartheta,\tX_{\vartheta}(\tau,\tilde{\eta}))/{\cal
F}_\tau),~ \forall   \tau\leq \vartheta. 
\end{eqnarray*}
\end{itemize}
The proof is complete.
\end{proof}

\section{New approach by stochastic flows.}\label{NASF}
In this section, where $\GX$ is only assumed to be convex class, we generalize the construction of consistent progressive utilities proposed in \cite{MRADNEK01} where the market securities are modeled as a continuous semimartingale in a brownien market and where $\GX$ is the set of all positives wealth processes.   
We remind the reader that the results of the following sections  can be  stated in any class $\GX^Y$ obtained from $\GX$ by change of numeraire  and that similar results can be deduced for $\GX$  by using results of   Theorem \ref{ThSNC}. 

The main contribution of this section is the explicit construction of progressive dynamic utilities by techniques of stochastic flows composition.\\
The attentive reader might remark in the sequel that the duality approach and the duality results are not  necessary. Our
new approach is only based on the optimality conditions established in Theorem \ref{proprietes}, which we recall and analyze
in the sequel. Let begin by the main idea.

\subsection{Main Idea.}
Because we know several properties of  the  derivative  $U_x$  of an $\GX$-consistent utility $U$,
 along the optimal trajectory, i.e, $\Big(U_x\big(t, X^*_t(x)\big)\Big)_t $ given in Theorem \ref{proprietes}, the question is the following one: 
 can we  obtain  more information about the process $\big(U_x(t, x)\big)_t$, itself, from these properties?

\noindent Although this can appear too much to ask, because we try to characterize the derivative 
of a stochastic utility from its behavior on a very particular trajectory, but the answer to this
 question is positive and simple. Suppose that {\em the benchmark  process $X^*$ is strictly
 increasing with respect to its initial
 condition $x$}.  In turn the process $\big(Y^*(t,.)\big)_t$ which plays the role of $\Big(U_x\big(t,X^*_t\big((u_x)^{-1}(.)\big)\big)\Big)_t$ 
is strictly increasing with respect to $y$ because $U$ is strictly concave. Denoting by $\X(t,.)$ 
 the  inverse flow of $X^*_t(.)$,  one, easily, sees that last identity becomes,
\begin{eqnarray*}\label{eq:constutiliyflot}
 U_x(t,z)=Y^*_t(u_x(\X(t,z))),\text{~a.s.}~\forall t\ge0, z>0.
\end{eqnarray*}

\noindent
Integrating yields
\begin{eqnarray*}\label{eq:constutiliyflotA}
 U(t,x)=\int_0^xY^*_t(u_x(\X(t,z)))dz,\text{~a.s.}~\forall t\ge0, z>0.
\end{eqnarray*}

\noindent
This identity is the key of the construction propose, in this paper, in order to characterize $\GX$-consistent stochastic utilities.\\

\noindent
Note that monotony assumption  of the optimal-benchmark  process is very natural. For example,  in the results of  Example \ref{Exists}, 
the optimal benchmark process is strictly monotonous and even twice differentiable
 with respect to the initial capital $x$, under certain additional hypotheses. This is still true within the
 framework of decreasing (in the time) consistent "forward" utilities, studied by M. Musiela et al \cite{zar-08a}
 and Tehranchi et al. \cite{Mike}. We can also find these properties of the optimal process in the classic framework of
 portfolio optimization in the case of power, logarithmic, exponential utilities and in the multitude of examples 
proposed by Huy\^en Pham in \cite { Pham} and by Ioannis Karatzas and Steven Shreve in \cite { KaratzasShreve:01}.
To conclude, let us notice that, by absence of arbitrage opportunities  on the security market, the optimal process  can be only increasing with regard to the initial wealth, because otherwise by investing less money we could obtain the same gain. Mathematically, technical problems can appear, what leads to put this property as assumption.
\begin{hp}\label{CEH0}
Suppose the process $(X^*_t(x); t\ge 0)$ satisfying
\begin{eqnarray*}
 \forall t\ge 0, &x\mapsto X^*_t(x) \quad\mbox{\rm continuous and  strictly increasing, } 
  \text{\rm  s.t. }\\&X^{*}_t(-\infty)=-\infty\quad\,\> X^*_t(0)=0 \quad\,\>
 X^{*}_t(+\infty)=+\infty\text{~a.s.}
 \end{eqnarray*}
\end{hp}

\noindent
\begin{req}\label{monotonydeY}
Under this hypothesis, one can easily sees,  as the process $\big(Y^*_t(u_x(x)),~t\ge 0\big) $  plays the role of $\big(U_{x}(t,X^*_t(x)),~t\ge 0\big) $, $Y^*$ should   satisfy also,
\begin{eqnarray*}
& \forall t\ge0, x\mapsto Y^*_t(x), \text{ \rm positive strictly increasing, and s.t.  Inada conditions hold
if}\\
 &\quad Y^*_t(0)=0,\quad Y^*_t(+\infty)=+\infty\text{~a.s.}
\end{eqnarray*}
 
\end{req}

\subsection{Benchmark  process as a stochastic flow.}
The monotony assumption \ref{CEH0}  of the benchmark process $ X^*_t (x) $
 brings us naturally to consider it as the value, leaving from $x$ at $t=0$,
 of a stochastic flow $ (X^*_t (s, x))_{s\le t} $, which we define below. We can then consider 
the benchmark as leaving from condition $x$ at $t=0$ or leaving from condition $z$ at date $s$.
\begin{pp}\label{propritesduflot}
Let $ (X ^ * _ t (x)) $ be a  strictly monotonous  flow with respect to $x$ with values in $ ]-\infty, +\infty[ $.
 Its inverse $ \X (t, z) = (X^*_t(.))^{-1} (z) $ is also a strictly monotonous stochastic flow,
 defined on $  ]-\infty, +\infty[  $. We prolong the flow $X^*$ and its inverse $\X$ in the intermediate dates $ (s< t) $ in
the following way
\begin{eqnarray}\label{eq:defflot}
&X^*_t(s,x)=X^*_t(\X(s,x))\\
&\X_s(t,z)=(X^*_t(s,.))^{-1}(z)=X^*_s(\X(t,z)).\nonumber
\end{eqnarray}

\noindent
In particular, we have the following properties 
\begin{description}
\item[(i)] Equality $X_t^*(s,x)=X_t^*(\alpha,X_\alpha^*(s,x))$
hold true for all $0\le \alpha  \le  s\le t$ a.s..
\newline
Identity $\X_s(t,z)=\X_s(\alpha,\X_\alpha(t,z))$ hold true for all $0\le s \le \alpha \le t$ a.s..
\item[(ii)] Moreover, $\quad X_t^*(t,x)=x, ~\X_t(t,z)=z$, and \newline $\quad \X_s(t, X^*_t(s,x))=x, \quad
X^*_t(s,\X_s(t,x))=x,$  for all $0\le s\le t$.
\end{description}
\end{pp}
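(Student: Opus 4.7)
The plan is to treat this as a direct verification based on (a) standard properties of inverses of strictly monotone continuous maps and (b) substitution in the defining formulas (\ref{eq:defflot}). There is no probabilistic content beyond Hypothesis \ref{CEH0}: all identities are pathwise consequences of the fact that, for each fixed $\omega$ and $t$, the map $x\mapsto X^*_t(x)$ is a homeomorphism of $[0,\infty)$ onto itself whose inverse is, by definition, $z\mapsto\X(t,z)$.

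First I would argue that $z\mapsto\X(t,z)$ is itself a continuous, strictly increasing flow on $[0,\infty)$ with $\X(t,0)=0$ and $\X(t,\infty)=\infty$: this is the classical inverse function theorem for monotone continuous surjections, applied pathwise using Hypothesis \ref{CEH0}. Then the two-parameter extensions $X^*_t(s,\cdot)$ and $\X_s(t,\cdot)$ defined in (\ref{eq:defflot}) inherit strict monotonicity and continuity by composition.

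Next I would check assertion (ii), which is essentially the definition of inverse. For $X^*_t(t,x)$ we have $X^*_t(t,x)=X^*_t(\X(t,x))=x$, and symmetrically $\X_t(t,z)=X^*_t(\X(t,z))=z$. For the two composition identities, substituting the definitions gives
\begin{equation*}
\X_s(t,X^*_t(s,x))=X^*_s\bigl(\X(t,X^*_t(\X(s,x)))\bigr)=X^*_s(\X(s,x))=x,
\end{equation*}
where the middle equality uses that $\X(t,\cdot)$ inverts $X^*_t(\cdot)$. The identity $X^*_t(s,\X_s(t,x))=x$ is handled analogously.

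Finally, for the cocycle property (i), for $0\le\alpha\le s\le t$ I would unroll the definitions:
\begin{equation*}
X^*_t\bigl(\alpha,X^*_\alpha(s,x)\bigr)
=X^*_t\Bigl(\X\bigl(\alpha,X^*_\alpha(\X(s,x))\bigr)\Bigr)
=X^*_t(\X(s,x))=X^*_t(s,x),
\end{equation*}
again using that $\X(\alpha,\cdot)\circ X^*_\alpha(\cdot)=\mathrm{id}$. The dual statement $\X_s(t,z)=\X_s(\alpha,\X_\alpha(t,z))$ for $0\le s\le\alpha\le t$ follows by the symmetric computation. No step is really an obstacle; the only thing to pay attention to is keeping track of which argument refers to a time and which to an initial condition, so the bookkeeping with (\ref{eq:defflot}) is done unambiguously.
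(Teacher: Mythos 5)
Your verification is correct: the identities in (i) and (ii) are indeed pure pathwise algebra, obtained by substituting the definitions in (\ref{eq:defflot}) and using that $\X(t,\cdot)$ inverts $X^*_t(\cdot)$, with Hypothesis \ref{CEH0} guaranteeing that each $x\mapsto X^*_t(x)$ is a strictly increasing homeomorphism of $[0,\infty)$. The paper itself gives no proof of this proposition (it only points to Kunita for the general theory of stochastic flows), and your direct substitution argument is exactly the routine verification the paper leaves implicit.
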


\noindent
These are important properties  which will be used several times bellow. 
For more details, we invite the reader to see H. Kunita \cite{Kunita:01}  for the general theory of stochastic flows. 
\subsection{Optimality Conditions.}
We remind in this paragraph  some results and  notations,
established in the previous section, which will play crucial role  in the sequel.
Let $U$ be an $\GX$-consistent stochastic utility, optimality conditions
imply that the derivative $U_x$ taken over the optimal-benchmark portfolio 
$X^*$, i.e. $ \big(U_x(t,X^*_t(x))\big)_t $ plays the role of dual process in  our study (Theorem \ref{proprietes}). In the case of homogeneous
constraint $ \big(U_x(t,X^*_t(x))\big)_t $ is a positive supermartingale. Furthermore, the 
process $U_x\big(t,X^*_t((u_x)^{-1}(y))\big)_t$ is the optimal dual process  of the dual optimization
problem (\ref{dualpb}) denoted by $\big(Y^*_t(y)\big)_t$ (see Theorem \ref{Duality}).   
We remind, also, that the conditions which have to satisfy
necessarily optimal processes $X^{*}(x) $ and $Y^*(y) $
as we established them in Theorem \ref{proprietes} of
paragraph \ref {OPCOND} are the following.

For  any stopping time $\tau$ and any $\tau$-attainable r.v. $\eta,~\eta'$,
\begin{description}
\item[(1)]  $X^*(\tau,\eta)\in \GX(\tau,\eta)$.
\item[(2)] For any $X(\tau,\eta')\in \GX(\tau,\eta')$,
$(X_t(\tau,\eta')-X^*_t(\tau,\eta))Y^*_t(\tau,u_x(\tau,\eta)));t\ge \tau)$ is a g-supermartingale. In other words $\Big(Y^*_t(\tau,u_x(\tau,\eta)));t\ge \tau\Big) \in \GY_{X^*(\tau,\eta)}(\tau,u_x(\tau,\eta))$.
\end{description}

%

\noindent
From this, the monotony assumption and the above notations, it is easy to see, writing for any stopping time $\tau$ and any
r.v. $\eta$ $\tau$-attainable:  $ \eta=X^*_\tau(\X(\tau,\eta)) $, that $$ U_x(\tau, \eta) = U_x(\tau,X^*_ \tau(\X (\tau,
\eta)) = Y^*_\tau (u_x(\X (\tau, \eta)). $$ This implies, in particular, that
$$Y^*_t\big(\tau,U_x(\tau,\eta)\big)=Y^*_t\big(\tau,Y^*_\tau (u_x(\X (\tau, \eta))\big)=Y^*_t(u_x(\X (\tau, \eta)),~t\ge \tau$$
Hence, the process $\big(Y^*_t (u_x(\X (\tau, \eta))\big)_t$, starting at time $t=0$ from $u_x(\X (\tau, \eta)$,  can be interpreted as the extension to all $t\ge0$ of $\Big(Y^*_t\big(\tau,U_x(\tau,\eta)\big)\Big)_{t\ge \tau}$ which plays the role of $\big(U_x(t,X^*_t(\tau,\eta)\big)_{t\ge \tau}$.

\noindent
Summing up, from this point, optimality conditions above and the fact that the initial condition $u$ occurs in  the optimality conditions $(2)$,
 we define a set of properties to which we shall often refer afterwards. 

\begin{df}\label{def:CNoptimalite}
Let  $X^*$ and $Y^*$ be two given random fields and let $u$ an utility function. Conditions   $(\mathcal{O}^*)$ are :

\noindent
For  any stopping time $\tau$ and any $\tau$-attainable r.v. $\eta,~\eta'$,
\begin{description}
\item[(O1)]  $X^*(\tau,\eta)\in \GX(\tau,\eta)$.
\item[{\bf (OC)}] For any $X(\tau,\eta')\in \GX(\tau,\eta')$,
$(X_t(\tau,\eta')-X^*_t(\tau,\eta))Y^*_t(u_x(\X (\tau, \eta)));t\ge \tau)$ is a g-supermartingale. In other words $\big(Y^*_t (u_x(\X (\tau, \eta))\big)_{t\ge \tau} \in \GY_{X^*(\tau,\eta)}(\tau,u_x(\X (\tau, \eta)))$.
\end{description}
\end{df}

Note that, contrary to condition $(2)$, condition {\bf (OC)} is written only in terms of the initial condition $u$, the interpretation is clearer but both conditions are  equivalent.
\subsection{Construction of $\GX$-consistent utilities for a given benchmark process. }\label{Construction}
As  announced  in the introduction of this section,
our objective, under strictly monotonous hypothesis of optimal process $X^* $, is to construct $\GX$-consistent utilities of a given benchmark process $X^*$ in the class of test-processes $\GX$. After the general characterization of the consistent stochastic utilities,  the construction is presented in the special case, where the optimal dual process $Y^*$ is linear with respect to its initial condition, that is $Y^*_.(y)\equiv y\bar{Y}_t$ . This  is an interesting case because: on one side it includes the well known utilities of exponential, powers types etc ... and on the other side its gives a complete overview of the main properties of the pair $(X^*,Y^*)$ and  an intuitive explanation of the phenomenon occurring in the general case. This will be the aim of the next paragraph.
\subsubsection{Existence of $\GX$-consistent utilities for a given benchmark process.}

The previous study shows that
if, there exists $\bar{Y}\in \GY_{X^*}$ such that the process $X^*\bar{Y}$  is martingale, and not only a
 supermartingale, the  process $Y^*$ s.t. $Y^*_t(y) =y\bar{Y}_t $
is admissible in the sense that the pair $ (X^*,Y^*) $
satisfy conditions $\mathcal{O}^*$ of Definition
\ref{def:CNoptimalite} for any initial utility function $u$. 

The main idea (equation (\ref{eq:constutiliyflot})) suggests a very simple form of a $\GX$-consistent utility $U(t, x) $ of given monotonous optimal test-process. If $\X (t, z) $ denote the inverse of $X^*_t(x) $, the concave increasing  process $U(t, x) $ such that $ U_x(t, x) =u_x(\X (t, x))\bar{Y}_t $
is a good candidate  to be an $\GX$-consistent utility. Another remarkable property of this stochastic process
is that $U_x(t, X^*_ t (x)) =u_x (x)\bar{Y}_t(=Y^*_t(u_x(x)))$, what is in another way
to express that optimal dual process $Y^*_t(y)$ is linear with respect to its initial condition $y$. This is the main idea of the following result.

\begin{theo}\label{th:CNV} 
Let $X^*_t(x)$ be a test-process assumed to be strictly increasing with respect 
to the initial condition $x$ such that there exists $\bar{Y}\in \GY_{X^*}$ satisfying that the process $X^*\bar{Y}$  is martingale. Denote by $\X(t,z)$ its  inverse flow.
Then for any martingale $M$ and any utility function  $u$ such that  $u_x(\X(t,z))$ is locally integrable near $z=0$, the stochastic process
 $U$ defined by
\begin{equation}\label{eq:defutilitysimple}
U(t,x)=\bar{Y}_t\int_0^x u_x( \X(t,z))dz+ M_t
\end{equation}
is an $\GX$-consistent stochastic utility. The associated optimal  process is $X^*$ and the optimal dual process is  $Y^*(y)=y\bar{Y}$.
Further, the convex conjugate of $U$ denoted by  $\tU$, is given by 
 \begin{eqnarray}\label{deftuG}
 \tU(t,y)=\int_y^{+\infty}X^*_t(-\tu_y(\frac{z}{\bar{Y}_t})) dz +\tilde{M}_t,
\end{eqnarray}
with $\tilde{M}$ is a martingale.
\end{theo}
Note that this result generalizes the example of affine utilities given in paragraph \ref{Exists}, it suffices to take
$u_x=cte$. In particular, we stress the fact that the assumption: "There exists $\bar{Y}\in \GY$ satisfying that the process
$X^*\bar{Y}$  is martingale" is equivalent to the necessary condition at least in the homogeneous case, see assertion $(ii)$
Theorem \ref{proprietes}. This just once again highlight the necessity of optimality conditions, Theorem  \ref{proprietes},
in the study of existence of  the consistent utilities.

\medskip
\noindent
The proof of Theorem \ref{th:CNV} will be broken into several steps.

\noindent
\begin{lem}\label{lemma1}
 For any stopping time $\tau$, any random variable $\eta$ $\tau$-attainable  and any test-process $(X_t(\tau,\eta);s\leq t)\in \GX(s,\eta)$, we have
\begin{equation}\label{lem1eq}
\E\big(U(t,X_t(\tau,\eta)/\mathcal{F}_\tau\big)\le \E\big(U(t,X^*_t(\tau,\eta))/\mathcal{F}_\tau\big)\text{~a.s.} 
\end{equation}

\end{lem}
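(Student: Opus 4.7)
The plan is to exploit the crucial identity $U_x(t,x) = u_x(\X(t,x))$, which is an immediate consequence of the definition (\ref{eq:defutilitysimple}), together with the concavity of $U(t,\cdot)$ and the martingale market structure in which we work.

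First I would unwind the definitions. By construction $U(t,\cdot)$ is concave and $U_x(t,x) = u_x(\X(t,x))$. Evaluating along $X^*_t(s,\eta) = X^*_t(\X(s,\eta))$ (Proposition \ref{propritesduflot}) gives
\begin{equation*}
U_x\bigl(t, X^*_t(s,\eta)\bigr) = u_x\bigl(\X(t,X^*_t(s,\eta))\bigr) = u_x\bigl(\X(s,\eta)\bigr),
\end{equation*}
since $\X_s(t,X^*_t(s,x))=x$. The key point is that this quantity is $\F_s$-measurable and does not depend on $t$.

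Next I would apply the concave supergradient inequality: for every $\omega$,
\begin{equation*}
U\bigl(t,X_t(s,\eta)\bigr) - U\bigl(t,X^*_t(s,\eta)\bigr) \le U_x\bigl(t, X^*_t(s,\eta)\bigr)\,\bigl(X_t(s,\eta) - X^*_t(s,\eta)\bigr).
\end{equation*}
Conditioning on $\F_s$ and pulling out the $\F_s$-measurable factor $u_x(\X(s,\eta))\ge 0$ yields
\begin{equation*}
\E\bigl[U(t,X_t(s,\eta)) - U(t,X^*_t(s,\eta)) \mid \F_s\bigr] \le u_x(\X(s,\eta))\,\E\bigl[X_t(s,\eta) - X^*_t(s,\eta) \mid \F_s\bigr].
\end{equation*}

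Finally I would invoke the martingale market hypothesis. By assumption $X^*$ is a true martingale, so $\E[X^*_t(s,\eta)\mid\F_s] = \eta$. Any admissible $X(s,\eta)\in \GX(s,\eta)$ is a positive local martingale, hence a supermartingale, so $\E[X_t(s,\eta)\mid\F_s]\le \eta$. The conditional expectation on the right is therefore nonpositive, and multiplication by $u_x(\X(s,\eta))\ge 0$ preserves the sign, giving (\ref{lem1eq}).

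The only technical point to be careful about is the applicability of the supergradient inequality, which requires $U(t,\cdot)$ to be concave with derivative $U_x(t,\cdot)$; this is immediate from (\ref{eq:defutilitysimple}) because $x\mapsto \X(t,x)$ is increasing and $u_x$ is decreasing, so $U_x(t,\cdot)$ is decreasing, i.e.\ $U(t,\cdot)$ is concave with the correct derivative. Integrability is inherited from the consistency requirement, and the conditional expectation manipulation is legitimate since $u_x(\X(s,\eta))$ is $\F_s$-measurable and nonnegative. No step requires strict monotonicity beyond what Assumption \ref{CEH0} already gives us.
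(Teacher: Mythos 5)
Your proof is correct and follows essentially the same route as the paper's: the concave supergradient inequality, the identity $U_x(t,X^*_t(s,\eta))=u_x(\X(s,\eta))=U_x(s,\eta)$ obtained from the flow property, and the martingale/supermartingale comparison in the martingale market. The only (welcome) addition is your explicit remark that $u_x(\X(s,\eta))$ is $\F_s$-measurable and nonnegative, which the paper uses implicitly when pulling it out of the conditional expectation.
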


\noindent
\begin{proof}
By concavity of the process  $x\mapsto U(t,x)$, we have
\begin{eqnarray*}
 U\big(t,X_t(\tau,\eta)\big)-U\big(t,X^*_t(\tau,\eta)\big)\le \big(X_t(\tau,\eta)-X^*_t(\tau,\eta)\big)
 U_x\big(t,X^*_t(\tau,\eta)\big)\text{~a.s.}
\end{eqnarray*}
\noindent 
From Definition (\ref{eq:defutilitysimple}) of $U$,  we get that $U_x\big(t,X^*_t(\tau,\eta)\big)=\bar{Y}_tu_x(\X(t,X^*_t(\tau,\eta)))$.
 On the other hand, using proposition  \ref{propritesduflot}, we have $X^*_t(\tau,\eta)=X^*_t(\X(\tau,\eta))$ and hence, by definition of $\X$,
 we obtain  $U_x\big(t,X^*_t(\tau,\eta)\big)= \bar{Y}_\tau u_x(\X(\tau,\eta))=U_{x}(\tau,\eta)\bar{Y}_{\tau,t}$ with $\bar{Y}_{s,t}:=\bar{Y}_t/\bar{Y}_s$. The inequality bellow becomes
\begin{eqnarray}\label{inegthA}
 U\big(t,X_t(\tau,\eta)\big)-U\big(t,X^*_t(\tau,\eta)\big)\le \bar{Y}_{\tau,t}\big(X_t(\tau,\eta)-X^*_t(\tau,\eta)\big)
 U_x\big(\tau,\eta\big)\text{~a.s.}
\end{eqnarray}

\noindent
We have also that $(\bar{Y}_{s,t}X^*_t(\tau,\eta),~t\ge \tau)$ is a martingale  by assumption  and   $(\bar{Y}_{\tau,t}X_t(\tau,\eta),~t\ge \tau)$ is a g-supermartingale because  $\bar{Y}\in \GY_{X^*}$. Those properties, together with (\ref{inegthA}), imply
 \begin{eqnarray*}
 \E\Big(U\big(t,X_t(\tau,\eta)\big)-U\big(t,X^*_t(\tau,\eta)\big)/\mathcal{F}_\tau\Big)\le  \E\big(\bar{Y}_{\tau,t}(X_t(\tau,\eta)-X^*_t(\tau,\eta)\big)/\mathcal{F}_\tau\big)
 U_x\big(\tau,\eta\big)\le 0.
\end{eqnarray*}

\noindent
This will prove the validity of (\ref{lem1eq}).

\end{proof}

\begin{lem}\label{lemma2}
For any stopping time  $\tau$ and for any $\tau$-attainable random variable $\eta$, denoting $\bar{Y}_{\tau,t}:=\bar{Y}_t/\bar{Y}_\tau$ for $t\ge \tau$,
$$U(t,X^*_t(\tau,\eta))=U_x(\tau,\eta)\bar{Y}_{\tau,t}X^*_t(\tau,\eta)-\bar{Y}_t\int_{0}^{\X(\tau,\eta)}\> X^*_t(z)du_x(z)\text{~a.s.} $$
 and  it is a martingale.
\end{lem}

\noindent
\begin{proof}
Fix $\tau$ and $\eta$ any $\tau$-attainable random variable, by definition, for $t\ge \tau$, we have
\begin{eqnarray*}
 U(t,X^*_t(\tau,x))=\bar{Y}_t\int_0^{X^*_t(\tau,x)} u_x( \X(t,z))dz
\end{eqnarray*}

\noindent Consider the increasing change of  variable
$z'= \X(t,z)$ or equivalently $z=X^*_t(z')$. Using identity  $\X(t,X^*_t(\tau,x))=\X(\tau,z)$  it follows
\begin{eqnarray*}
 U(t,X_t^*(\tau,x))=\bar{Y}_t\int_{0}^{\X(\tau,x)} u_x(z)\>d_z\,X^*_t(z)
\end{eqnarray*}

\noindent
Integration by parts with integrability assumptions imply
\begin{eqnarray*}
 U(t,X^*_t(\tau,x))=u_x(\X(\tau,x))\bar{Y}_tX^*_t(\tau,x)-\bar{Y}_t\int_{0}^{\X(\tau,x)}\> X^*_t(z)du_x(z).
\end{eqnarray*}
Replacing $x$ by $\eta$ and using the fact that $\bar{Y}_su_x(\X(\tau,\eta))=U_x(\tau,\eta)$ yields the desired identity
\begin{eqnarray*}
 U(t,X^*_t(\tau,\eta))=U_x(\tau,\eta)\bar{Y}_{\tau,t}X^*_t(\tau,\eta)-\bar{Y}_t\int_{0}^{\X(\tau,\eta)}\> X^*_t(z)du_x(z).
\end{eqnarray*}

\noindent
While  $(\bar{Y}_{\tau,t}X^*_t(\tau,\eta),~t\ge \tau)$ is a martingale and $U_x(\tau,\eta)$ is $\F_\tau$-measurable $U_x(\tau,\eta)\bar{Y}_{\tau,t}X^*_t(\tau,\eta),~t\ge \tau$ is a martingale. Using the Fubini-Tonelli theorem,
 the integral on $u_x(z)$ of $\bar{Y}_tX^*_t(z)$  is  martingale. 
 Consequently, as a sum of two martingales, the sequence of random variables $(U(t,X^*_t(\tau,x)),~t\ge \tau)$ is a martingale.

\end{proof}

\noindent
We  now have prepared the ingredients for the   proof of Theorem \ref{th:CNV}.
\begin{proof}(Theorem \ref{th:CNV})
Since $u$ is an utility function \footnote[1]{$u$ is a strictly concave and increasing function}   and $\X$ is strictly increasing, $U(t,.)$ is a strictly concave and increasing function. To conclude, we have to check that the above Lemmas imply
 assertions $ii)$ and $iii)$ of Definition \ref{defUF}.

\noindent
Let  $(X_t(\tau,\eta);t\ge \tau)\in \GX(\tau,\eta)$ be a test-process, we have, using Lemmas \ref{lemma1} and \ref{lemma2},
\begin{eqnarray*}
\E\big(U(t,X_t(\tau,\eta)/\mathcal{F}_\tau\big)\le
\E\big(U(t,X^*_t(\tau,\eta))/\mathcal{F}_\tau\big) =U(\tau,\eta) \text{~a.s.}
\end{eqnarray*}

\noindent 
Which proves the consistency with the class-test $\GX$. Existence and uniqueness of optimal is a simple consequence
 of $X^*$-admissibility and strict concavity of $U$,  so that we may deduce that $U$ is an $\GX$-consistent 
stochastic utility with $X^*$ as optimal portfolio. On the other hand, the  optimal dual process is given by $U_{x}(t,X^*_t(\tau,\eta))/U_x(\tau,\eta)$  which is equal to one by construction. Finally, identity \eqref{deftuG} directly follow from the conjugacy relation $\tU_y(t,y)=-(U_x)^{-1}(t,y)$. 
\end{proof}
 \begin{req}\label{X(s,x)}
Let us note in passing that, if   the  processes $X_t(s,x)$ defined by 
\begin{equation}
 X_t(s,x)=X_t(\X(s,x))
\end{equation}
 are admissible  test-process, then we can replace $\eta$ in the previous two Lemmas,  simply, by $x$.
There is no modifications to be brought in proofs. In other words, if we can start at any time $s$ from any $x\in \R_+$ then,
 replacing $X_t(\tau,\eta)$ by $X_t(\tau,x)$ and $X^*_t(\tau,\eta)$ by $X^*(\tau,x)$, Lemmas \ref{lemma1} and \ref{lemma2} still valid. But note that this assumption suggest that any $x\in \R$ is $\tau$-attainable for any stopping time $\tau$, which is a  strong assumption. \\
Clearly, we do not make this hypothesis, which in some ways complicate our study and that of \cite{MRADNEK01}. But on the other hand, in order to overcome some difficulties the technique of stochastic change of variables is a powerful alternative for such problems.
\end{req}

\paragraph*{Application: Change of Numeraire}
It is obvious that the above theorem shows the existence of consistent utilities  and completely characterizes a large class of these random fields. This on one side, but on the other side this result accurately explains the structure of these utilities and establishes the link between these random fields and the associated  optimal processes, which is certainly true in the classical portfolio optimization by utility criterion. This message is clearer by applying the change of numeraire $1/\bar{Y}$, the following theorem rewrites as follows


\begin{theo}\label{th:CNV01} 
Let $\bar{X}^*_t(x)$ be a test-process in $\GX^{\bar{Y}}$ assumed to be  {\bf martingale} and  strictly increasing with respect to the initial condition. Denote by $\bar{\X}(t,z)$ its  inverse flow.
Then for any martingale $M$ and any utility function  $u$ such that  $u_x(\bar{\X}(t,z))$ is locally integrable near $z=0$,
the stochastic process  $U$ defined by
\begin{equation}\label{eq:defutilitysimpleM}
U(t,x)=\int_0^x u_x( \bar{\X}(t,z))dz+M_t
\end{equation}
is an $\GX^{\bar{Y}}$-consistent stochastic utility. The associated optimal benchmark process is $\bar{X}^*$ and the optimal dual process is constant  $\bar{Y}^*(y)=y$.
Further, the convex conjugate of $U$ denoted by  $\tU$, is given by 
 \begin{eqnarray}\label{deftuGM}
 \tU(t,y)=\int_y^{+\infty}\bar{X}^*_t(-\tu_y(z)) dz+\tilde{M}_t,
\end{eqnarray}
with $\tilde{M}$ is a martingale.
\end{theo}

\noindent
 This result merits some comments. First, the derivative of the stochastic utility is other than a deterministic function of
the inverse map of the optimal portfolio, equivalently the  derivative $\tU_y(t,y)$ of the convex conjugate is exactly
(minus) the optimal benchmark (the optimal wealth in the case of financial market) with the initial condition
$(u_x)^{-1}(y)$.
Second, starting from a financial market the martingale market, obtained by change of numeraire $1/Y$ with $Y$ is a State price density process, is not unique. Then the fact that the optimal dual process is constant does not mean that the market is complete but that this dual optimal process is linear with respect to its initial condition $u_x(x)$ in the selected martingale market.

\subsubsection{Construction of all $\GX$-consistent utilities for a given  benchmark process.}\label{ConstructionG}
In this section, we turn to the central result of this paper. We showed  in Theorem \ref{th:CNV} that for any increasing  test-process $X^* $, such $X^*$  is a 
martingale, we can construct a consistent utilities of optimal benchmark process
 $X^* $. The feature of these consistent utilities, defined by (\ref{eq:defutilitysimple}), 
is that the optimal dual process is fixed to $1$. 
In order to characterize all consistent utilities with given optimal portfolio $X^*$,
 we consider  more general class of processes $Y^*$ such that
 optimality conditions
$\mathcal {O}^* $ are satisfied for the pair
$ (X^*,Y^*)$. As we saw it, the intuition is to characterize
utilities  $U$ such that $U_{x}(t, x) = Y^*\big(u_x( \X (t, x))\big) $, where $\X (t, x)$
is the inverse flow of $X^*$. The monotony condition   of $X^*$
draw away that the stochastic flow $Y^*$ must be increasing to
guarantee that $U_{x}(t, x) $ is decreasing. To resume, in the sequel we, only, consider pairs $(X^*,Y^*)$
  of  processes and utility function $u$ satisfying 

\begin{hp}\label{hpC1}
\begin{description}
 \item[(\small{A1})] The process $(X^*_t(x);x\in \R,t\ge
0)$ is strictly increasing from $-\infty$ to $+\infty$ while  $(Y^*_t(y);~y\ge 0,t\ge 0)$, according to remark \ref{monotonydeY}, is strictly
 increasing from $+\infty$ to $0$ such that $Y^*_t(u_x(x))$ is locally integrable near $x=0$.
 \item[(\small{A2})] The triplet $ (X^*,Y^*,u)$ satisfy $\mathcal {O}^* $.
\end{description}
\end{hp}

\noindent
The martingale property  of the process $\big(X^{*}_t(x)Y^*_t(y); t\ge 0\big)$ played a key role in establishing the validity of Lemma \ref{lemma2}  and consequently that of Theorem \ref{th:CNV}. In general this property is not satisfied  (Theorem \ref{proprietes}). However, one can remark that, in general case, that the martingale property  hold true with $X^{*}$ replaced by his derivative  $D_xX^*$  with respect to $x$ (if it exists). Indeed, from Theorem \ref{proprietes}, for any $\delta>0$
$$\Big(\big(X^*_t(x+\delta)-X^*_t(x)\big)Y^*_t(u_x(x))\Big)_t ~\text{ is g-supermartingale}$$
$$\Big(\big(X^*_t(x-\delta)-X^*_t(x)\big)Y^*_t(u_x(x))\Big)_t ~\text{ is g-supermartingale}$$
If  $D_xX^{x}$ exists, one gets, letting $\eps\searrow 0$, that 
$\mp Y^*(u_x(x))D_xX^*(x)$ is a g-supermartingale, then martingale. In the following, this property implies a generalization
of  Lemma \ref{lemma2} which will  be needed to show the main result of this work.

\noindent
To justify passage on the limit and furthermore, in order to generalize our new approach (Theorem \ref{th:CNV}), the following domination assumption suffices.

\begin{hp}\label{hpIntegrabiliteX^*}
 \begin{description}
 \item[H1 local)] For all $x$, there exists an integrable positive adapted process,
  $U_t(x)>0$ such that, if we denote by  $\mathbf{B}(x,\alpha)$ the ball of radius $\alpha>0$ centered at $x$,
\begin{eqnarray}
 \forall y,y' \in \mathbf{B}(x,\alpha), |X^*_t(y)-X^*_t(y')|<|y-y'|\,U_t(x),\text{~a.s.} ~t\ge 0
\end{eqnarray}
 \item[ H2 global) ] $U_t(x)$ is increasing with respect to $x$ and $U^I_t(x)=\int_0^x\Y(t,z)U_t(z)dz$ is integrable for all $t\ge 0$.
\end{description}
\end{hp}

\noindent
Let us point out that this hypothesis is introduced only to justify result of the following proposition. Summing up, under this assumption

\begin{pp}\label{Yd_zXmartingale}
Let assumptions \ref{hpC1} and \ref{hpIntegrabiliteX^*} hold. If the derivative with respect to $x$
of the increasing  process $X^*_t(x)$ denoted by $D_x\,X^*_t(x)$ exists in any point $x$, then
$Y^*_t(u_x(x))D_x\,X^*_t(x)$ is a martingale.
Otherwise, without derivability assumption, the process
\begin{eqnarray}\label{integralA}
 \int_0^xY^*_t(u_x(z))d_zX^*_t(z),
\end{eqnarray}

\noindent
is also a martingale.
\end{pp}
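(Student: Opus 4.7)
The strategy is to derive the martingale property from condition (OC) of $(\mathcal{O}^*)$ applied to the admissible family $\{X^*(x')\}_{x'>0}$ itself. Since each $X^*(x')$ is an admissible wealth by (O1), (OC) gives that for all $x,x'>0$ the process $\big(X^*_t(x')-X^*_t(x)\big)\Y(t,x)$ is a supermartingale in $t$, and interchanging the roles of $x$ and $x'$ shows $\big(X^*_t(x')-X^*_t(x)\big)\Y(t,x')$ is a submartingale. These two one-sided controls are the essential input.

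For the integral statement, I would approximate $\int_0^x\Y(t,z)d_zX^*_t(z)$ by Riemann--Stieltjes sums along partitions $0=z_0<\cdots<z_n=x$ of mesh $\delta\to 0$. The left-endpoint sum $L_n(t):=\sum_i\Y(t,z_i)\big(X^*_t(z_{i+1})-X^*_t(z_i)\big)$ is a supermartingale (a finite sum of supermartingales) and the right-endpoint sum $R_n(t):=\sum_i\Y(t,z_{i+1})\big(X^*_t(z_{i+1})-X^*_t(z_i)\big)$ is a submartingale. Continuity and monotonicity of $X^*_t(\cdot)$ together with continuity of $\Y(t,\cdot)$ give $\omega$-wise convergence of both $L_n(t)$ and $R_n(t)$ to $\int_0^x\Y(t,z)d_zX^*_t(z)$ as $\delta\to 0$. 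Combining the super and sub-martingale properties in the limit yields the martingale property, provided we can pass to the limit under conditional expectations.

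The main obstacle is precisely this $L^1$-domination, and it is where Assumption~\ref{hpIntegrabiliteX^*} does the work. For a fine enough partition, H1-local together with the monotonicity of $U_t$ built into H2 yields $|X^*_t(z_{i+1})-X^*_t(z_i)|\le(z_{i+1}-z_i)U_t(z_i)$, so both $|L_n(t)|$ and $|R_n(t)|$ are dominated by Riemann sums for $\int_0^x\Y(t,z)U_t(z)\,dz=U^I_t(x)$, and these Riemann sums themselves converge monotonically to $U^I_t(x)\in L^1$ by H2. The conditional dominated convergence theorem then transfers the super/sub-martingale inequalities to the limit, and $\int_0^x\Y(t,z)d_zX^*_t(z)$ is simultaneously a supermartingale and a submartingale, hence a martingale.

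For the derivative version, I would specialize the same argument to infinitesimal increments: $\Y(t,x)\big(X^*_t(x+h)-X^*_t(x)\big)/h$ is a supermartingale converging pointwise to $\Y(t,x)\,D_xX^*_t(x)$ as $h\searrow 0$, while $\Y(t,x+h)\big(X^*_t(x+h)-X^*_t(x)\big)/h$ is a submartingale with the same limit. H1-local supplies the $h$-uniform bound $|X^*_t(x+h)-X^*_t(x)|/h\le U_t(x+\alpha)$, and the pointwise integrability of $\Y(t,\cdot)U_t(\cdot)$ extracted from H2 (using that $\Y$ is decreasing and $U_t$ increasing in $x$, so $E[\Y(t,x)U_t(x)]$ is controlled by $E[U^I_t(x+\alpha)-U^I_t(x-\alpha)]/(2\alpha)$) provides the required $L^1$-dominant. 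Dominated convergence then yields that $\Y(t,x)D_xX^*_t(x)$ is both a supermartingale and a submartingale, hence a martingale.
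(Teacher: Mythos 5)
Your proposal is correct and follows essentially the same route as the paper: condition {\bf (OC)} applied to the family $\{X^*(x')\}_{x'>0}$ yields the one-sided super/submartingale properties, the integral is approximated by left- and right-endpoint Darboux sums (the paper's $S_N$ and $S'_N$), and Assumption \ref{hpIntegrabiliteX^*} provides the $L^1$-domination used to pass these inequalities to the limit, exactly as in the paper's proof. The only cosmetic difference is in the derivative case, where you take the submartingale side as $\Y(t,x+h)\big(X^*_t(x+h)-X^*_t(x)\big)/h$ while the paper uses the backward increment $\Y(t,x)\big(X^*_t(x)-X^*_t(x-\epsilon)\big)/\epsilon$ (which avoids appealing to continuity of $\Y$ in $x$); your domination estimates are spelled out at the same level of precision as, and no less rigorously than, the paper's own invocation of Assumption \ref{hpIntegrabiliteX^*}.
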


\medskip
\noindent
We show  in the proof of Theorem \ref{th:CNV}, that quantity 
\begin{eqnarray*}
 \int_0^{\X(\tau,\eta)}Y^*_t(u_x(z)) d_zX^*_t(z).
\end{eqnarray*}

\noindent
corresponds to $U (t,X^*_t(\tau,\eta)) $  where $U$ is a process which we define afterwards. Particularly,
  this proposition is other than a generalization of Lemma \ref{lemma2} where we replace deterministic
 quantity $u_x $ by the process $\Y$.

\begin{proof} 
 For the duration of the proof we write $\Y(t,x)$ for $Y^*_t(u_x(x))$. Then we have to show that 
\begin{eqnarray*}
 \int_0^{\X(\tau,\eta)}\Y(t,z) d_zX^*_t(z).
\end{eqnarray*}
is a martingale.

{\bf a)} First, suppose  $X^*_t(x)$  is differentiable with respect to  $x$. For
$0<\epsilon<\alpha$, the process
$\Y(t,x)\big(X^*_t(x+\epsilon)-X^*_t(x)\big)$ is a positive supermartingale (assertion ({\bf(OC)} of Theorem \ref{proprietes}). By assumption 
\ref{hpIntegrabiliteX^*} the right derivative with respect to $\epsilon$,  $\Y(t,x)D^+_x\,X^*_t(x)$ is a  positive supermartingale.\\

On the other hand, $\Y(t,x)\big(X^*_t(x)-X^*_t(x-\epsilon)\big)$ is a positive submartingale. Once again, the hypothesis \ref {hpIntegrabiliteX^*} is
 used to show that we can again pass to the limit and deduct that $\Y(t,x)D^-_xX^*_t(x)$ is a positive submartingale. From derivability of
 $X^*$, $D^-_xX^*_t(x)=D^+_xX^*_t(x)=D_xX^*_t(x)$ and then the process $\Y(t,x)D_xX^*_t(x)$ is, consequently,  a sub and  supermartingale and therefore martingale.\\

\noindent
{\bf b)} In the general case, without differentiability assumption on $X^*$, we use  Darboux sum to study the properties of
$S(x)=\int_0^x\Y(t,z)d_zX^*_t(z)$. We partition the interval  $[0,x]$ into $N$ subintervals   $]z_n,z_{n+1}]$ where the mesh approaches zero. 
To approach the integral (\ref{integralA}) by below respectively by above we consider respectively the following sequences
\begin{eqnarray*}
S_N(t,x)&=&\sum_{n=0}^{n=N-1}\Y(t,z_{n})\big(X^*_t(z_{n+1})-X^*_t(z_{n})\big)\\
 S'_N(t,x)&=&\sum_{n=0}^{n=N-1}\Y(t,z_{n+1})\big(X^*_t(z_{n+1})-X^*_t(z_{n})\big).\\
\end{eqnarray*}

\noindent
By the same arguments as above, the sequence $S_N (t, x) $ is a positive supermartingale, 
while the sequence  $S ' _N (t, x) $ is a positive submartingale, and a positive local martingale if $\GX$ is homogeneous. 
In all cases, by hypothesis \ref {hpIntegrabiliteX^*}, the positive processes $ S_N (t, x) $ and $S'_N (t, x) $ are bounded above by $$\bar{S}_N(t,x):=\sum _ {n=0}^{n=N-1} \Y (t, z_{ n+1 }) U_t (z_{ n+1 }) $$ 
Moreover, under assertion H2 global) of hypothesis \ref {hpIntegrabiliteX^*} , $\bar{S}_N(t,x)$ is bounded above by $U^I_t(x) = \int_0^x\Y (t, z) U_t ( z )dz$. As the properties of sub and  supermartingale  are preserved in passing to the limit it follows that $\int_0^x\Y (t, z) d_zX^*_t (x)$ is  a martingale.
\end{proof}

\noindent
 We have now all  elements to characterize consistent utilities of given optimal benchmark.

\begin{theo}[General Characterization]\label{thP'}
Let $(X^*,Y^*)$  be a pair of  processes and $u$ any utility function such that  assumptions \ref{hpC1} and \ref{hpIntegrabiliteX^*} hold.
Let $\X$ the inverse flow of $X^*$, $\Y$ the inverse flow of $Y^*$, $M$ a martingale  and $\tu$ the convex conjugate of $u$. Then the concave increasing process $U$ defined by
\begin{equation}\label{defuG}
 U(t,x)=\int_0^x Y^*_t\big(u_x(\X(t,z))\big)dz+M_t
\end{equation}
is an $\GX$-consistent stochastic utility  with 
$u$ as the initial function,  $X^*$ as the  optimal benchmark process. The optimal dual process is $Y^*$ and the convex conjugate is given by
\begin{eqnarray}\label{defuGD}
\tilde{U}(t,y)=\int_y^{+\infty} X^*_t\Big(-\tu_y\big(\Y^*(t,z)\big)\Big)dz+\tilde{M}_t.
\end{eqnarray}
With $\tilde{M}$ is a martingale.
\end{theo}

\noindent
In Theorem \ref{th:CNV}, for a given initial utility, we construct an $\GX$-consistent utility of given optimal portfolio (martingale). The extension which we give here which, up-technical points, characterizes all the $\GX$-consistent utilities equivalent to the previous one (in the sense that they gives the same optimal portfolio process). This characterization expresses only how we have to diffuse the function $u_x(x) $ to stay within the framework of the $\GX$-consistent utilities. The answer is intuitive because it expresses that it is enough to keep a monotonous  flow  $Y^*\in \GY_{X^*}$: $Y(X-X^*),~X\in \GX$ are a g-supermartingale. Moreover, note that in the forward problem the idea, at the beginning, is to diffuse the initial utility $u$ using the information given by the path of $X^*$.  Contrary to what one might think we observe clearly that the diffusion is not on $u$ but on the derivative $u_x$.

\begin{proof}

As in the  previous Theorem, the proof is made in two step.
 The consistency with the universe of investment is based on two essential properties:\\
 $-$ On one hand on the fact that $(U(t, X^*_t (s,\eta)),~t\ge s)$  is  a martingale.\\
$-$ On the other hand, the consistency with the class-test $\GX(s,\eta)$.

To show these properties, we begin by the following result which is the extension of Lemma \ref{lemma1}.

\begin{lem}
Under assumptions of the previous theorem, for any stopping time $\tau$, any random variable $\eta$ $\tau$-attainable  and any test-process $(X_t(\tau,\eta);s\leq t)\in \GX(s,\eta)$, we have
\begin{equation}\label{lem1eq2}
\E\big(U(t,X_t(\tau,\eta)/\mathcal{F}_\tau\big)\le \E\big(U(t,X^*_t(\tau,\eta))/\mathcal{F}_\tau\big)\text{~a.s.} 
\end{equation}

\end{lem}

\noindent
\begin{proof} The proof is identical to that of Lemma \ref{lemma1}.
By concavity of the process  $x\mapsto U(t,x)$, it follows 
\begin{eqnarray*}
U\big(t,X_t(\tau,\eta)\big)-U\big(t,X^*_t(\tau,\eta)\big)\le \big(X_t(\tau,\eta)-X^*_t(\tau,\eta)\big)
U_x\big(t,X^*_t(\tau,\eta)\big)\text{~a.s.}
\end{eqnarray*}
\noindent 
By Definition  of $U$ and $X^*_t(\tau,\eta)=X^*_t(\X(\tau,\eta))$,   $U_x\big(t,X^*_t(\tau,\eta)\big)=Y^*_t\big(u_x(\X(t,X^*_t(\tau,\eta)))\big)=Y^*_t\big(\tau,U_x(\tau,\eta)\big)$.
 The inequality bellow becomes
\begin{eqnarray}\label{inegthA2}
 U\big(t,X_t(\tau,\eta)\big)-U\big(t,X^*_t(\tau,\eta)\big)\le Y^*_t\big(\tau,U_x(\tau,\eta)\big)\big(X_t(\tau,\eta)-X^*_t(\tau,\eta)\big) \text{~a.s.}
\end{eqnarray}

\noindent
By Assumption,  for any $X(\tau,\eta)\in \GX(\tau,\eta)$,  $Y^*_t\big(\tau,U_x(\tau,\eta)\big)\big(X_t(\tau,\eta)-X^*_t(\tau,\eta)\big),~t\ge \tau)$ is a g-supermartingale. Those properties, together with (\ref{inegthA2}), imply
 \begin{eqnarray*}
 \E\Big(U\big(t,X_t(\tau,\eta)\big)-U\big(t,X^*_t(\tau,\eta)\big)/\mathcal{F}_\tau\Big)\le  \E\big(Y^*_t\big(\tau,U_x(\tau,\eta)\big)\big(X_t(\tau,\eta)-X^*_t(\tau,\eta)\big)/\mathcal{F}_\tau\big)\le 0.
\end{eqnarray*}

\noindent
This will prove the validity of (\ref{lem1eq2}).

\end{proof}

\noindent 
To conclude, it suffices to  show that $U(t,X^*_t(x))
 $ is a martingale. To be made, we proceed as in Lemma \ref{lemma2} by writing that 
 $U(t,X^{*}_t(s,\eta))=\int_0^{X^*_t(s,\eta)}Y^*_t\Big(u_x\big(\X(t,z')\big)\Big)dz'$. 
Let us make the  change of variable $\X(t,z') =z$, consequently, because $\X(t,X_t^*(s,\eta))=\X(s,\eta)$, we get 

\begin{eqnarray*}
U(t,X^{*}_t(s,\eta))
=\int_0^{\X(s,\eta)}Y^*_t\Big(u_x(z)\Big)d_z(X^{*}_t(z))
\end{eqnarray*}

\noindent
Finally, by proposition \ref{Yd_zXmartingale},  $\Big(\int_0^{\X(s,\eta)}Y^*_t\Big(u_x(z)\Big)d_z(X^{*}_t(z)),~t\ge s\Big)$ is a martingale and,
 hence, $\Big(U(t,X^{*}_t(s,\eta)),~t\ge s\Big)$ is a martingale and the proof is complete.
\end{proof}

\noindent

The general characterization of consistent stochastic utility in this result is given  according to the initial condition at
time $0$, $U_x(0,.)=u_x$. But it is also possible to write the formula for any intermediate date  $s$ as it is given in the
following result

\begin{cor}
Under assumptions of Theorem \ref{thP'}, for any stopping time $\tau$, the $\GX$-consistent stochastic utility  $U$ defined by \eqref{defuG} and its convex conjugate $\tU$ are rewritten
\begin{eqnarray*}
 U(t,x)&=&\int_0^x Y^*_t\Big(\tau,U_x(\tau,\X_\tau(t,z))\big)dz,~t\ge \tau\\
\tilde{U}(t,y)&=&\int_y^{+\infty} X^*_t\Big(-\tU_y\big(\tau,\Y_\tau(t,z)\big)\Big)dz,~t\ge \tau.
\end{eqnarray*}
\end{cor}

\begin{proof}
Recalling the notation $\X_0(\tau,x)=\X(\tau,x)$, the proof of this result is based on the previous theorem. Indeed,   rewriting for $t\ge \tau$ $$U_x(t,x)=Y^*_t(u_x(\X(t,x)))=Y^*_t\Big(\tau,Y^*_\tau\big(u_x(\X(t,x))\big)\Big)$$ and using the fact that 
$\X\big(\tau,\X_\tau(t,x)\big)=\X(t,x)$ it follows 
\begin{eqnarray*}
U_x(t,x)&=&Y^*_t\Big(\tau,Y^*_s\big(u_x(\X\big(\tau,\X_\tau(t,x)\big))\big)\Big)\\
        &=&Y^*_t\Big(\tau,\big[Y^*_\tau\big(u_x(\X\big(\tau,.\big))\big)\big](\X_\tau(t,x))\Big)
\end{eqnarray*}

From this point and the identity $U_x(\tau,.)=Y^*_\tau\big(u_x(\X(\tau,.))\big)$ yields
\begin{eqnarray*}
U_x(t,x)&=&Y^*_t\Big(\tau,\big[Y^*_\tau\big(u_x(\X\big(\tau,.\big))\big)\big](\X_\tau(t,x))\Big)\\
        &=&Y^*_t\Big(\tau,U_x\big(\tau,\X_\tau(t,x)\big)\Big)        
\end{eqnarray*}
Integrating yields the result. Inverting the roles of $X^*$ and $Y^*$,  the same arguments allows us to establish the dual identity.
\end{proof}


%


\subsection*{Conclusion}
 Despite the abstract framework of this paper and although the results here are, under minimal regularity assumptions, an extension of those set in \cite{MRADNEK01}, the proofs in this work are much simpler and require less computations. Second, as  announced at the beginning of this work the results and the method are valid for  more general convex sets of test-processes, provided they are rich enough, because only the property of convexity plays a role in the proofs of Theorems.  Finally, as we have seen we can do without the duality, which allows an interpretation of  process $U_x(t,X^*_t)$.

\nocite{*}
\bibliographystyle{plain}
\bibliography{ArefT}
\end{document}